\newcommand{\A}{\mathfrak{A}}
\newcommand{\C}{\mathscr{C}}
\newcommand{\E}{\mathscr{E}}
\newcommand{\M}{\mathscr{M}}
\newcommand{\NN}{\mathbb{N}}
\newcommand{\OO}{\mathscr{O}}
\newcommand{\RR}{\mathbb{R}}
\newcommand{\Sc}{\mathscr{S}}
\newcommand{\UU}{\mathscr{U}}
\newcommand{\ud}{\mathrm{d}}
\newcommand{\VV}{\mathscr{V}}
\newcommand{\To}{\rightarrow}
\newcommand{\Then}{\Rightarrow}
\newcommand{\sm}{\smallsetminus}
\newcommand{\gotoas}[3]{\stackrel{{#2}\rightarrow{#3}}{\longrightarrow}{#1}}
\newcommand{\ol}[1]{\overline{#1}}
\newcommand{\restr}[1]{\vert_{#1}}
\begin{document}

\newtheorem{theorem}{\mdseries\scshape Theorem}[section]
\newtheorem{lemma}[theorem]{\mdseries\scshape Lemma}
\newtheorem{proposition}[theorem]{\mdseries\scshape Proposition}
\newtheorem{corollary}[theorem]{\mdseries\scshape Corollary}
\newtheorem{scholium}[theorem]{\mdseries\scshape Scholium}
\newtheorem{definition}{\mdseries\scshape Definition}[section]
\newtheorem{remark}[definition]{\mdseries\scshape Remark}
\newtheorem{conjecture}{\mdseries\scshape Conjecture}[section]

%% \newcommand{\qed}{~\hfill~$\Box$}

%% \newenvironment{proof}{%\begin{quotation}
%% {\small\scshape Proof.\quad}\small\upshape }{\qed\normalsize} % \end{quotation}} 

% \selectlanguage{english}

\title{Diamond-Shaped Regions as Microcosmoi}
% \author{Pedro Lauridsen Ribeiro\inst{1}}
\author{Pedro Lauridsen Ribeiro}
%% \thanks{\emph{Previously at: II. Institut f\"{u}r theoretische Physik, Universit\"{a}t 
%% Hamburg, Luruper Chaussee 149, D-22761 Hamburg (HH), Germany} }
% \institute{Departamento de Matemática Aplicada, Instituto de Matemática e 
% Estatística da Universidade de São Paulo, Rua do Matão 1010, 05508-090 São Paulo 
% (SP), Brazil}
\address{Departamento de Matemática Aplicada, Instituto de Matemática e Estatística, Universidade de São Paulo -- Rua do Matão 1010, 05508-090 São Paulo (SP), Brazil}
\email{plaurids@ime.usp.br}
% \date{Received: date / Accepted: date}
\date{\today}
% \communicated{name}

%% \hfill\\
%% {\small Departamento de Matemática Aplicada,} \\
%% {\small Instituto de Matemática e Estatística } \\
%% {\small da Universidade de São Paulo}\\[2pt]
%% {\small Rua do Matão 1010, 05508-090 São Paulo (SP), Brazil}\\[2pt] 
%% {\small Email: \texttt{plaurids@ime.usp.br}}}
%%% {\small II. Institut f\"{u}r theoretische Physik -- Universit\"{a}t Hamburg}\\[2pt]
%%% {\small Luruper Chaussee 149, D-22761 Hamburg (HH), Germany}\\[2pt] 
%%% {\small Email: \texttt{pedro.lauridsen.ribeiro@desy.de}}}

\begin{abstract}
We give a geometrically intrinsic construction of a global time function for
relatively compact diamond-shaped regions in arbitrary spacetimes. In the case of
Minkowski spacetime, the flow of diffeomorphisms associated to a suitably normalized
gradient of this time function becomes the conformal isotropy subgroup of the diamond.  
In full generality, this time function is elegantly expressed in terms of the Lorentzian 
distance function, and it has an asymptotic behavior at large absolute times similar to 
the one in Minkowski spacetime. % We also discuss the physical interpretation of this 
% particular choice of time function in the context of quantum field theory on curved spacetime.
\end{abstract}

\maketitle

\subjclass{\emph{2010 Mathematics Subject Classification:} Primary 53C50; Secondary 83C75}

\keywords{\emph{Keywords:} Lorentzian geometry, Cauchy time functions, Lorentzian distance, causal structure}

%% \tableofcontents

\section{Introduction}\label{intro}

\subsection{A somewhat revised (hi)story}\label{intro:hist}

A great deal of all physics is the study of time evolution of a physical system, with
the objective to be able to predict (or at least estimate) its future behavior from 
the knowledge of the past. However, the choice of what we mean by ``time'' has always 
been considered a more philosophical issue from a \emph{physical} viewpoint, whereas 
from a \emph{mathematical} viewpoint -- that is, (typically) the study of differential 
equations --, it has been nothing more than a choice of a convenient ``flow parameter''. 
With the advent of General Relativity, such a choice was forcefully brought to the 
conceptual forefront. Namely, the choice of ``time'' \emph{must be} ultimately devoid 
of any operational meaning, in the sense that it cannot fixed by any concrete physical 
procedure\footnote{Unless one makes some additional assumptions on ``test devices'', 
which amount to an approximate idealization which is not intrinsic to the theory 
\cite{westson}.}, but the laws ruling such procedures must, however, be \emph{local} 
and \emph{independent} of such a choice -- such was the lesson put forward by Einstein. 
This automatically leads to a more sensible question: what choices make \emph{manifest} 
the predictability of physical laws? Or, from a more mathematical standpoint: which 
choices render the initial value problem for the dynamics of a physical system (at 
least locally) well posed? \\

If we impose as well the requirement that physical effects propagate locally with speed 
less than or equal to the speed of light (microcausality), we are immediately led to the 
study of hyperbolic partial differential equations in general spacetimes, started by 
Hadamard \cite{had} and Riesz \cite{riesz} in a local setting. A more detailed study of 
the (Lorentzian) geometry of spacetimes reveals that the regions where \emph{global} 
well-posedness for this class of equations holds must have the property that, roughly, 
effects that can causally reach an event $p$ from the past and another event $q$ from 
the future should be spread over a \emph{finitely extended} subregion. This was proven by 
Leray in his Princeton lectures \cite{leray} -- more precisely, the Cauchy problem for 
hyperbolic partial differential equations in a Lorentzian manifold $(\M,g)$ is well 
posed within the \emph{domain of dependence} $D(\Sigma)$ of the initial data hypersurface 
$\Sigma$ (for these and all other notions of Lorentzian geometry employed in this work, 
we refer the reader to to Subsection \ref{intro:not})
\begin{equation}\label{e1}
\begin{split}
D(\Sigma) & \doteq\{p\in\M:\forall\mbox{ inextendible causal curve }\gamma\mbox{ s.t. }\gamma(\lambda_0)=p \\ 
 & \exists!\lambda_1\mbox{ s.t. }\gamma(\lambda_1)\in\Sigma\}.
\end{split}
\end{equation}

For $(\M,g)$ \emph{causal} \cite{bernsan4}, we have that $p\leq q\in D(\Sigma)$ iff
\begin{equation}\label{e2}
J^+(p)\cap J^-(q),\,J^+(p)\cap J^-(\Sigma),\,J^+(\Sigma)\cap J^-(q)\mbox{ are \emph{compact}},
\end{equation}
that is, any dynamical information that propagates from/to $\Sigma$ with speed up to that 
of light cannot leak to infinity. Such a property is tantamount to well-posedness of Cauchy 
problems of the aforementioned kind, and due to Leray is the christening ``global hyperbolicity'' 
for this property\footnote{Actually, Leray's definition of global hyperbolicity demands that the 
space of piecewise smooth causal functions linking $p$ and $q$ is either empty or compact w.r.t. 
the compact-open topology. It turns out, however, that this property is equivalent to \eqref{e2}. 
See \cite{wald} and \cite{bernsan4} for a detailed discussion on this matter.}:

\begin{definition}\label{d1}
A spacetime $(\M,g)$ such that $\M=intD(\Sigma)$ for some achronal $\C^0$ hypersurface 
$\Sigma$ is called \emph{globally hyperbolic}, and $\Sigma$ is then said to be a
\emph{Cauchy hypersurface} for $(\M,g)$.
\end{definition}

% \begin{figure}\label{f1}
% \begin{center}
% \input{./diamond1.pstex_t}
% \caption{Domain of dependence of $\Sigma$.}
% \end{center}
% \end{figure}

The equivalence of Definition \ref{d1} with the validity of \eqref{e2} for all $p\leq q\in\M$ 
was proven by Geroch \cite{geroch} (with some gaps filled later by Dieckmann \cite{dieckmann}); 
moreover, in this case he constructed a \emph{continuous}, surjective function $t:\M\rightarrow
\RR$ which is \emph{strictly increasing} along any future-directed causal curve (i.e. a 
\emph{global time function}) such that $t^{-1}(\lambda)$ satisfies \eqref{e1} for all 
$\lambda\in\RR$ (i.e. $t$ is a \emph{Cauchy time function}). A question that immediately arose, 
and became one of the main ``folk theorems'' in general relativity, was: can $t$ be chosen 
\emph{smooth}? This is a highly desirable property -- for instance, it implies that $\ud t$ is 
then a future directed timelike covector. A first partial answer was proposed by Seifert \cite{seifert}, 
but his smoothing procedure is difficult to understand. A satisfactory answer to this question 
ended up coming much later, with the seminal work of Bernal and Sánchez \cite{bernsan1,bernsan2,bernsan3}. 
These papers actually work in the more general case of \emph{stably causal} spacetimes, that is,
the ones which admit a (continuous) global time function $t$, and one of their main results is
that one can always regularize $t$ so as to make it smooth and still retain the characteristic
traits of a global time function. Moreover, if $t$ has Cauchy level surfaces, the regularization
procedure proposed by Bernal and Sánchez also retains this property.\\

Of course, the procedure discussed above still retains a lot of freedom. One may try to go further 
and ask whether there are choices of Cauchy time functions which are \emph{geometrically} natural, 
that is, depend only on the \emph{intrinsic} geometry of spacetime. This question is interesting not 
only in itself, but also may lead to relevant tools for addressing the dynamics of hyperbolic partial 
differential equations in the large by purely geometrical means. This is absolutely essential for the 
analysis of Einstein's equations, as dramatically manifested in the seminal work of Christodoulou
and Klainerman on the global nonlinear stability of Minkowski spacetime \cite{ck} and the developments 
originated from their ideas. For \emph{static} spacetimes, the answer to this question is an obvious
yes -- just pick a time function $\tau$ whose flow induced by the foliation of $\M$ by the level sets
of $\tau$ is generated by a Killing vector field. In a cosmological context, this question was addressed 
in considerable generality by Andersson, Galloway and Howard \cite{angalho}. Roughly, the Lorentzian 
distance to the initial cosmological singularity (which generalizes the conformal time of 
Friedmann-Robertson-Walker spacetimes) provides us such a choice. This time function is 
generally not smooth, but twice differentiable almost everywhere.

\subsection{Aims of the present work}\label{intro:aim}

Our objective is to show that, for certain globally hyperbolic regions of general spacetimes, the 
question posed at the last paragraph of the Introduction always has a positive answer. To wit, we 
give a procedure to construct global time functions from $g$ alone for regions of the form 
$\OO_{p,q}\doteq I^+(p)\cap I^-(q)$, $p\ll q\in\M$. Such regions are hereby called \emph{diamonds}. 
If $(\M,g)$ is \emph{causally simple} (that is, $J^\pm(p)$ are closed for any $p\in\M$) and $J^+(p)
\cap J^-(q)=\ol{\OO_{p,q}}$ is \emph{compact}, then such regions are automatically globally
hyperbolic (see next Section), and can be considered as ``dynamically closed'' spacetimes 
in their own right. This viewpoint pervades the whole of the paper, and is expounded
in Section \ref{sec:2}.\\

An important aspect of our construction -- which, to begin with, was its original motivation --
is that it generalizes the flow parameter of the one-parameter subgroup of the conformal 
group of the Minkowski spacetime $\RR^{1,d-1}$ that preserves a particular diamond and whose
orbits within the latter are timelike. This result is discussed in detail in Subsection 
\ref{sec:2:mink}. Indeed, for $p$ and $q$ in general spacetimes, the diffeomorphism flow 
generated by our time function approaches this flow parameter in a rather precise fashion. 
This is shown in Section \ref{sec:3}, which forms the core of the paper. The general 
construction is performed in Subsection \ref{sec:3:large}, making substantial use of tools 
from global Lorentzian geometry, namely the Lorentzian distance function \cite{beemee} and 
its fine differentiability properties, as discussed in \cite{angalho} and also in a different 
context by Moretti \cite{moretti}. More detailed properties of our global time function for 
diamonds (particularly, the aforementioned asymptotic behavior) can be obtained when the 
latter are contained in a geodesically convex neighborhood, as done in Subsection \ref{sec:3:small}. 
A slight generalization of our construction for \emph{half-diamonds}, i.e., regions of the form 
$I^\pm(q)\cap\Sigma$ for an acausal hypersurface $\Sigma$, is presented in Section \ref{sec:4}. 
% , which is afterwards applied to obtain a simple, geometric local energy estimate for the 
% wave and Klein-Gordon operators (Subsection \ref{sec:4:energy}), thus bringing our whole 
% discussion to a full circle. 
This paper concludes with some comments on possible future uses of our framework.

%%%%%%%

% Some pains were also taken to include the proofs of the necessary technical results in
% the Appendices, so as to make the paper essentially self-contained.

\subsection{Notation and nomenclature}\label{intro:not}

Here we collect all basic notions of Lorentzian geometry and fix the notation we will use 
throughout the paper -- the reader can safely skip this Subsection at a first reading and
return to it when necessary. All our manifolds $\M$ are smooth, paracompact, second countable 
and oriented. Recall that a ($d$-dimensional) \emph{pseudo-Riemannian manifold} is a pair 
$(\M,g)$, where the smooth section $g\in\Gamma^\infty(\M,\vee^2T^*\M)$ is at each $p\in\M$ a 
non-degenerate symmetric bilinear form on $T_p\M$ with $r$ negative and $d-r$ positive 
eigenvalues (by Sylvester's law of inertia, $r$ is independent of the local trivialization). 
We say then that $g$ has \emph{index} $r$, or \emph{signature} $(r,d-r)$. We also say that 
$(\M,g)$ is \emph{Lorentzian} if $r=1$, and a \emph{spacetime} if, in addition, there is a 
vector field $T\in\Gamma^\infty(\M,T\M)$ such that $g(T,T)<0$ (that is, $(\M,g)$ is 
\emph{time oriented} by $T$). We denote the Levi-Civita connection associated to $g$ simply 
by $\nabla$, since no confusion will arise from this.\\

Given a spacetime $(\M,g)$, $0\neq X(p)\in T_p\M$ is said to be \emph{causal} or \emph{non-spacelike} 
(resp. \emph{timelike}, \emph{spacelike}, \emph{null}) if $g(X(p),X(p))(p)\leq 0$ (resp. $<0$, 
$>0$, $=0$) -- these properties are said to define the \emph{causal character} of $X(p)$. A 
vector field $X$ is then said to have a certain causal character if $X(p)$ has the same causal 
character for each $p\in\M$. For example, $T$ as in the previous paragraph is timelike, and 
so on; one defines analogously the causal character of covectors and 1-forms. Likewise, a 
piecewise smooth curve $\gamma:(a,b)\To\M$ is said to be \emph{causal} or \emph{non-spacelike} 
(resp. \emph{timelike}, \emph{spacelike}, \emph{null}) if $g(\dot{\gamma}(\lambda),\dot{\gamma}
(\lambda))(\gamma(\lambda))\leq 0$ (resp. $<0$, $>0$, $=0$) for all $\lambda\in(a,b)$.
$X(p)\in T_p\M$, $\omega(p)\in T^*_p\M$ are said to be \emph{future} (resp. \emph{past}) 
\emph{directed} if $g(T(p),X(p))(p),\omega(T)(p)>0$ (resp. $<0$) -- this implies that $X(p)$ 
and $\omega(p)$ are necessarily causal. $X\in\Gamma^\infty(\M,T\M)$, $\omega\in\Gamma^\infty(\M,
T^*\M)$, $\gamma:(a,b)\To\M$ are likewise said to be future (resp. past) directed if $g(T,X),
\omega(T)>0$ (resp. $<0$) everywhere on $\M$ and $g(T,\dot(\gamma)(t))>0$ (resp. $<0$) for
all $t\in(a,b)$.\\

Given $p,q\in\M$, we say that $p$ \emph{chronologically} (resp. \emph{causally}) \emph{precedes}
$q$, written as $p\ll q$ (resp. $p\leq q$) if there is a future directed timelike (resp. causal)
curve segment $\gamma:[a,b]\To\M$ such that $\gamma(a)=p$, $\gamma(b)=q$ -- we write the reverse
relations as $q\gg p$ (resp. $q\geq p$), and say that $q$ chronologically (resp. causally) succeeds 
$p$. We now define the \emph{chronological past} (resp. \emph{future}) $I^\mp(p)$ of $p\in\M$ 
and the \emph{causal past} (resp. \emph{future}) $J^\mp(p)$ of $p\in\M$ as 
\[
I^{-/+}(p)=\{q\in\M:q\ll/\gg p\},\,J{-/+}(p)=\{q\in\M:q=p\mbox{ or }\leq/\geq p\}.
\] 
We say that $\UU\subset\M$ is \emph{achronal} (resp. \emph{acausal}) if there are no $p,q\in\UU$ 
such that $p\ll q$ (resp. $p\leq q$). Given an achronal set $\UU\subset\M$, the \emph{edge} $\dot{\UU}$ 
of $\UU$ is defined as the set of the $p\in\ol{\UU}$ such that there is an open neighborhood 
$\VV\ni p$, $q\in\VV\cap I^-(p)$, $r\in\VV\cap I^+(p)$ and a future directed causal curve 
segment $\gamma:[a,b]\To\M$ such that $\gamma(a)=q$, $\gamma(b)=r$ and $\gamma([a,b])\cap
\UU=\varnothing$. One can show \cite{oneill} that any achronal set $\UU$ such that $\dot{\UU}
\cap\UU=\varnothing$ is an embedded, locally Lipschitz submanifold of $\M$ with codimension 
one. Recalling from the Introduction the notion of \emph{Cauchy development} $D(\UU)$ of $\UU$, 
one can further show \cite{oneill} that, if in addition $\UU$ is acausal, then $D(\UU)$ is open 
and hence globally hyperbolic with Cauchy hypersurface $\UU$.\\

We say that a spacetime $(\M,g)$ is \emph{chronological} (resp. \emph{causal}) if 
$p\not\ll p$ (resp. $p\not\leq p$) for all $p\in\M$, i.e. there are no closed timelike
(resp. causal) curves in $\M$, and \emph{strongly causal} if, for all $p\in\M$, there is an
open neighborhood $\UU\ni p$ such that $\gamma((a,b))\cap\UU$ is connected for all causal
curves $\gamma:(a,b)\To\M$ -- since this property is equivalent to $q\leq r\in\UU$ implying 
that all causal curve segments from $q$ to $r$ are contained in $\UU$, we say that such a
$\UU$ is \emph{causally convex}. $(\M,g)$ is \emph{stably causal} if there is a continuous, 
non-vanishing, future directed timelike vector field $S$ such that $(\M,g-S\otimes S)$ is a 
chronological spacetime, or, equivalently \cite{wald,bernsan1}, if there is $t\in\C^\infty(\M)$ 
such that $t\circ\gamma$ is strictly increasing for all future directed causal curves $\gamma:
(a,b)\To\M$ (and, hence, $\ud t$ is a future directed timelike covector). We then say that 
such a $t$ is a \emph{global time function} on $(\M,g)$. The following chain of implications
is immediate: 
\[
(\M,g)\mbox{ strongly causal}\Then(\M,g)\mbox{ causal}\Then(\M,g)\mbox{ chronological}.
\]
Let now $t\in\C^\infty(\M)$ be a global time function, and $p\in t^{-1}(0)$. Then there is
$\epsilon>0$ and a coordinate chart $(\VV,x=(x^0,\mathbf{x}))$ around $p$ such that $x(p)=0$, 
$t\restr{\VV}=x^0$ and $\UU\doteq x^{-1}(\{(x^0,\mathbf{x}):|x^0|<\epsilon^2-|\mathbf{x}|^2\})
\subset\ol{\UU}\subset\subset\VV$. By taking $\epsilon$ sufficiently small, we can make the two 
``halves'' $\partial^\pm\UU\doteq x^{-1}(\{(x^0,\mathbf{x}):x^0=\pm(\epsilon^2-|\mathbf{x}|^2)\})$ 
of the boundary $\partial\UU$ of the ``lens-shaped'' open neighborhood $\UU$ acausal, so that 
any future directed causal curve $\gamma:(a,b)\To\M$ necessarily, if ever, enters (resp. leaves) 
$\UU$ by first crossing $\partial^-\UU$ (resp. $\partial^+\UU$), which implies that $\gamma((a,b))
\cap\UU$ is always connected since $t$ is a global time function. Therefore, 
\[
(\M,g)\mbox{ stably causal}\Then(\M,g)\mbox{ strongly causal}.
\] 
It can, last but not least, be shown \cite{beemee} that 
\[
(\M,g)\mbox{ globally hyperbolic}\Then(\M,g)\mbox{ causally simple}\Then(\M,g)\mbox{ stably causal}.
\]

\section{Structure of relatively compact diamonds}\label{sec:2}

\subsection{Generalities}\label{sec:2:gen}

For any spacetime, $J^+(p)\cap J^-(q)$ is the maximal region \emph{causally accessible} 
to any observer world-line between the events $p$ and $q$. Whenever $J^+(p)\cap J^-(q)$ is 
compact, we have that $J^+(p')\cap J^-(q')$ is also compact for all $p\leq p'\ll q'\leq q$, 
\emph{if} the latter set is \emph{closed}. The latter fact is always true for \emph{causally 
simple} spacetime, thus any relatively compact \emph{diamond} 
\begin{equation}\label{e3}
\OO_{p,q}\doteq I^+(p)\cap I^-(q)=\mathrm{int}(J^+(p)\cap J^-(q)),\,p\ll q\in\M
\end{equation}
in a causally simple spacetime $(\M,g)$ is \emph{globally hyperbolic}. It is easy to
see that the Cauchy surfaces of $\OO_{p,q}$ have edge $\partial I^+(p)\cap \partial I^-(q)$.
Notice that the edge may be empty, which is precisely the case when the Cauchy surfaces of 
$\OO_{p,q}$ are closed \emph{with respect to} $\M$ \cite{oneill} and hence compact. From 
now on, $\OO_{p,q}$ will always be \emph{relatively compact}, unless otherwise stated. 

% \begin{figure}\label{f3}
% \begin{center}
% \input{./diamond2.pstex_t}
% \caption{A relatively compact diamond $\OO_{p,q}$ in a causally simple spacetime.}
% \end{center}
% \end{figure}

Out of technical convenience, we may (and will) make instead the stronger assumption 
that $\ol{\OO_{p,q}}$ is also contained in an open, globally hyperbolic region of $\M$, 
which particularly on its turn allows us to drop the assumption of causal simplicity.
It is appealing to point out that the former can always be achieved in the causally 
simple case (which may be taken by the reader as a model), for by taking $p\ll 
p'\ll q'\ll q$ with $\OO_{p,q}$ relatively compact, $p'$ (resp. $q'$) arbitrarily 
close to $p$ (resp. $q$), and considering $\OO_{p',q'}$ instead of $\OO_{p,q}$.\\

\subsection{Manifold structure of the boundary}\label{sec:2:man}

The boundary of a diamond $\OO_{p,q}$ is made of three disjoint pieces:

\begin{equation}\label{eq3}
\partial\OO_{p,q}=\partial_{+}\OO_{p,q}\dot{\cup}\partial_-\OO_{p,q}\dot{\cup}\E_{p,q},
\end{equation}
where
\begin{equation}\label{eq4}
\begin{split}
\partial_{+}\OO_{p,q} &\doteq\partial I^-(q)\cap I^+(p),\\
\partial_-\OO_{p,q} &\doteq\partial I^+(p)\cap I^-(q),\\
\E_{p,q} &\doteq\partial I^+(p)\cap\partial I^-(q)
\end{split}
\end{equation}
are respectively called the \emph{future horizon}, the \emph{past horizon} and the \emph{edge} 
of $\OO_{p,q}$. Recall that $\partial I^-(q)\sm\{q\}$ (resp. $\partial I^+(p)\sm\{p\}$), being 
a future (resp. past) achronal boundary, is a locally Lipschitz hypersurface ruled by future 
(resp. past) directed null geodesics which, due to causal simplicity, have a future (resp. past) 
endpoint $q$ (resp. $p$). These null geodesics are called the \emph{null generators} of $\partial 
I^-(q)$ (resp. $\partial I^+(p)$), and are supposed to be extended towards the past (resp. future) 
as long as they remain achronal. Therefore the null generators of $\partial I^-(q)$ (resp. 
$\partial I^+(p)$) are either past (resp. future) inextendible, hence past (resp. future) complete 
by causal simplicity, or have a past (resp. future) endpoint beyond which any extension ceases to 
be achronal. Such a point is called simply a \emph{past} (resp. \emph{future}) \emph{endpoint} of 
$\partial I^-(q)$ (resp. $\partial I^+(p)$). Accordingly, the restriction of the null generators 
of $\partial I^-(q)$ (resp. $\partial I^+(p)$) to$\partial_{+}\OO_{p,q}$ (resp. $\partial_-\OO_{p,q}$) 
are called the \emph{null generators} of $\partial_{+}\OO_{p,q}$ (resp. $\partial_-\OO_{p,q}$). \\

At this point, a question that naturally arises is whether the closed, achronal set 
$\E_{p,q}$ also has a manifold structure. For $\ol{\OO_{p,q}}$ contained in a geodesically
convex open neighborhood, it turns out that $\partial I^+(p)$ and $\partial I^-(q)$
are transverse at $\E_{p,q}$, so the latter is indeed a submanifold. In the general case, 
there are two potential sources of problems:

\begin{enumerate}
\item $\E_{p,q}$ is generally just the intersection of two locally Lipschitz submanifolds, 
so it is not clear what it means for these submanifolds to be transverse;
\item There may be a (non achronal) null geodesic segment $\gamma:[a,b]\To\M$ from $p$ 
to $q$ which belongs to $\partial\OO_{p,q}$, in which case $\partial I^+(p)$ and 
$\partial I^-(q)$ cannot be transverse at $\gamma([a,b])\cap\E_{p,q}$ in any reasonable
sense.
\end{enumerate}

Recall that $\partial I^+(p)\sm\{p\}$ and $\partial I^-(q)\sm\{q\}$ can be written in
a neighborhood of each of their points as the graph of a Lipschitz function in 
suitable coordinate charts. Since transversality is a local property, we can work 
in an open domain $V\subset\RR^d$ which is the range of a coordinate chart $x:U\ni 
r\To V= x(U)$ around a point $r\in\E_{p,q}$. Suppose without loss of generality that
$p,q\not\in U$; writing the components of $x$ as $x^0,\ldots,x^{d-1}$, where the coordinate
vector field $\partial_{x^0}$ is supposed to be timelike and the coordinate vector
fields $\partial_{x^1},\ldots,\partial_{x^{d-1}}$ are supposed to be spacelike, we have that 
\[
\begin{split}
x(\partial I^+(p)\cap U) &=\{(x^0=f_{+}(\mathbf{x}),\mathbf{x}):\mathbf{x}\doteq(x^1,\ldots,x^{d-1})\in\bar{V}\},\\
x(\partial I^-(q)\cap U) &=\{(x^0=f_-(\mathbf{x}),\mathbf{x}):\mathbf{x}\in\bar{V}\},
\end{split}
\]
where $f_{+},f_-:\bar{V}\To\RR$ are Lipschitz on 
\[
\bar{V}=\{\mathbf{x}\in\RR^{d-1}:(x^0(r),\mathbf{x})\in V\}.
\]

If $\partial I^+(p)$ and $\partial I^-(q)$ are $\C^1$ at $r$, then they are transverse
at $r$ if and only if $\ud f_{+}\circ x(r)$ and $\ud f_-\circ x(r)$ are linearly
independent. At points where either $f_{+}$ or $f_-$ is not differentiable, we can use
the fact that any Lipschitz function $f$ on $\bar{V}$ is differentiable almost everywhere 
(Rademacher's theorem) and define the \emph{generalized differential} of $f$ at $r$ 
\cite{clarke} as the set
\[
\partial f(r)\doteq\mathrm{co}\{\lim\ud f(x_i):x_i\To x,x_i\not\in\Omega f\}
\]
where $\mathrm{co}$ denotes ``convex hull'' and $\Omega_f$ denotes the (null) set of 
points of $\bar{V}$ where $f$ is not partially differentiable with respect to all
coordinates. If $f$ is $\C^1$ at $r$, then one can show that $\partial f(r)=\{\ud f(r)\}$.
Using this definition, we say that $\partial I^+(p)$ and $\partial I^-(q)$ are 
\emph{transverse} at $r$ if any element of $\partial f_{+}(r)$ is linearly independent 
of any element of $\partial f_-(r)$. This definition is independent of the choice of
coordinates and functions $f_{+},f_-$ used to represent $\partial I^+(p)\cap U$ and 
$\partial I^-(q)\cap U$. If $\partial I^+(p)$ and $\partial I^-(q)$ are transverse in 
an open neighborhood of $r$ in $\E_{p,q}$, then we can invoke the same argument
used in the $\C^1$ case to reduce the problem of proving that the the transverse
intersection of two submanifolds is also a submanifold to an application of the
inverse function theorem \cite{hirsch}. At this point, we employ instead of the following

\begin{theorem}[Clarke's Inverse Function Theorem \cite{clarke}] 
Let $U\subset\RR^n$ open, $F:U\To\RR^n$ Lipschitz, $x_0\in U$. Define the generalized
Jacobian of $F$ at $x_0$ as the set of $n\times n$ matrices
\[
\partial F(x_0)\doteq\mathrm{co}\{\lim\ud F(x_i):x_i\To x,x_i\not\in\Omega F\},
\]
where $\Omega_F$ denotes the (null) set of points of $\bar{V}$ where the components of
$F$ are not all partially differentiable with respect to all coordinates. If all elements
of $\partial F(x_0)$ are nonsingular, then there are open neighborhoods $V\subset U,W$ 
of $x_0$ and a Lipschitz function $G:W\To\RR^n$ such that $G\circ F(v)=v$ for all $v\in V$ 
and $F\circ G(w)=w$ for all $w\in W$.\qed
\end{theorem}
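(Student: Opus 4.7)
The plan is to mirror the classical smooth inverse function theorem, replacing each smooth step by its counterpart in Clarke's generalized gradient calculus. First, since $F$ is Lipschitz, the assignment $x\mapsto\partial F(x)$ is an upper semicontinuous set-valued map on $U$ taking nonempty, compact, convex values in the space of $n\times n$ matrices. Compactness of $\partial F(x_0)$ together with nonsingularity of each of its members yields a uniform $c>0$ such that $\Norm{Mv}\geq c\Norm{v}$ for every $M\in\partial F(x_0)$ and every $v\in\RR^n$. Upper semicontinuity then supplies a convex open ball $V\ni x_0$, $\ol{V}\subset U$, on which the uniform bound $\Norm{Mv}\geq(c/2)\Norm{v}$ holds for all $M\in\partial F(x)$, $x\in V$.

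Next I would establish injectivity on $V$ together with a bi-Lipschitz lower bound. The key tool here is the nonsmooth mean value inclusion of Lebourg: for $x,y\in V$ one has $F(y)-F(x)\in\mathrm{co}\bbbra{M(y-x):M\in\partial F(z),\,z\in[x,y]}$. Combined with the uniform estimate from the previous step, this gives $\Norm{F(y)-F(x)}\geq(c/2)\Norm{y-x}$ throughout $V$. In particular $F\restr{V}$ is injective, and its (so far partially defined) inverse is Lipschitz with constant $2/c$.

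The main obstacle is the openness statement: one must show that $F(V)$ contains an open neighborhood $W$ of $y_0\doteq F(x_0)$, and the usual Newton-iteration argument from the smooth case is not directly available. I would proceed variationally. Shrink $V$ and pick $\delta>0$ small enough that, by the bi-Lipschitz bound just obtained, $\ol{B_\delta(y_0)}\cap F(\partial V)=\varnothing$. For any $y\in B_\delta(y_0)$, the continuous function $\varphi:x\mapsto\Norm{F(x)-y}^2$ on $\ol{V}$ attains its minimum at some interior point $x^*\in V$. Clarke's chain rule then yields $0\in\partial\varphi(x^*)\subset 2\bbbra{M^T(F(x^*)-y):M\in\partial F(x^*)}$; nonsingularity of every $M\in\partial F(x^*)$ forces $F(x^*)=y$. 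This exhibits $B_\delta(y_0)$ inside $F(V)$, so $W\doteq B_\delta(y_0)$ is the desired neighborhood and $G$ arises as the restriction of $F\restr{V}^{-1}$ to $W$, Lipschitz by the second paragraph.
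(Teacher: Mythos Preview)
The paper does not prove this theorem; it is quoted from Clarke's monograph and closed immediately with \qed, so there is no in-paper argument to compare against.

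Your outline is the standard modern route, and steps 1 and 3 are sound. There is, however, a genuine gap in step 2. The vector-valued mean value inclusion gives $F(y)-F(x)=M(y-x)$ for some $M\in\mathrm{co}\bigcup_{z\in[x,y]}\partial F(z)$, i.e.\ for a \emph{convex combination} of matrices drawn from possibly different fibres $\partial F(z)$. Your ``uniform estimate from the previous step'' was asserted only for individual $M\in\partial F(z)$, and a convex combination of matrices each satisfying $\Norm{M_iv}\geq(c/2)\Norm{v}$ need not satisfy the same bound (take $M_1=I$ and $M_2=-I$). The repair is short but has to be made explicit: upper semicontinuity yields, for $V$ small enough, $\partial F(z)\subset\partial F(x_0)+B_\epsilon$ for every $z\in V$; since $\partial F(x_0)$ is itself convex, taking convex hulls preserves this containment, so $\mathrm{co}\bigcup_{z\in V}\partial F(z)\subset\partial F(x_0)+B_\epsilon$, and for $\epsilon<c/2$ every matrix in the latter set inherits the lower bound. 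With this observation in place your bi-Lipschitz estimate, and hence the remainder of the argument, goes through.

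A minor quantitative point in step 3: the condition $\ol{B_\delta(y_0)}\cap F(\partial V)=\varnothing$ alone yields only $\varphi>0$ on $\partial V$, not $\varphi\restr{\partial V}>\varphi(x_0)$. You want the slightly stronger separation $\ol{B_{2\delta}(y_0)}\cap F(\partial V)=\varnothing$ (equivalently, halve $\delta$) so that $\Norm{F(x)-y}>\delta>\Norm{y_0-y}$ on $\partial V$ forces the minimiser into the interior.
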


To summarize, we have proven the

%%% think better about the statement!
%

\begin{corollary}
Let $f_{+},f_-$ be transverse at $\mathbf{x}_0$. Then there is a neighborhood $\bar{W}
\subset\bar{V}$ of $\mathbf{x}_0$ such that the set
\[
x(\E_{p,q})\cap ((\RR\times\bar{W})\cap V) =\{(x^0=f_{+}(\mathbf{x})=f_-(\mathbf{x}),
\mathbf{x}):\mathbf{x}\in\bar{W}\}
\]
is the graph of a Lipschitz function.\qed
\end{corollary}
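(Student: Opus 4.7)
The plan is to reduce the corollary to Clarke's inverse function theorem, applied to an auxiliary Lipschitz map built out of $f_{+}-f_-$ together with $d-2$ of the spatial coordinates. First I would relabel the spatial coordinates in $\bar{V}$ so that the generalized difference $\partial(f_{+}-f_-)$ is non-degenerate along a chosen axis; then define $F:\bar{V}\to\RR^{d-1}$ by $F(\mathbf{x})\doteq(f_{+}(\mathbf{x})-f_-(\mathbf{x}),x^2,\ldots,x^{d-1})$; verify that every element of $\partial F(\mathbf{x}_0)$ is nonsingular; and finally extract the graph structure of $\{f_{+}=f_-\}$ from a Lipschitz inverse $G$ of $F$ furnished by the theorem.

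The coordinate choice is possible because the transversality hypothesis forces $\xi_{+}-\xi_-\neq 0$ for every pair $\xi_{\pm}\in\partial f_{\pm}(\mathbf{x}_0)$: otherwise $\xi_{+}=\xi_-$ (possibly both zero) would make the two covectors scalar multiples of each other. Hence $0$ lies outside the compact convex set $D\doteq\partial f_{+}(\mathbf{x}_0)-\partial f_-(\mathbf{x}_0)\subset(\RR^{d-1})^*$, and strict Hahn--Banach separation yields a vector $v\in\RR^{d-1}$ with $\min_{\eta\in D}\eta(v)>0$; after a linear change of spatial coordinates one may assume $v=\partial_{x^1}$. For such a choice, the Clarke calculus inclusion $\partial(f_{+}-f_-)(\mathbf{x}_0)\subset D$, together with the linearity of the last $d-2$ components of $F$, forces every matrix in $\partial F(\mathbf{x}_0)$ to have first row some $\eta\in D$ and remaining rows $e_2,\ldots,e_{d-1}$; its determinant is therefore $\pm\eta(\partial_{x^1})$, which is bounded away from $0$.

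Clarke's inverse function theorem then yields open neighborhoods $V'\subset\bar{V}$ of $\mathbf{x}_0$ and $W'\ni F(\mathbf{x}_0)$, and a Lipschitz $G:W'\to V'$ inverse to $F\restr{V'}$. Because the last $d-2$ components of $F$ are just coordinate projections, so are those of $G$; writing $G=(G^1,y^2,\ldots,y^{d-1})$, the locus $\{\mathbf{x}\in V':f_{+}(\mathbf{x})=f_-(\mathbf{x})\}$ equals $G((\{0\}\times\RR^{d-2})\cap W')$, which is the graph of the Lipschitz function $(y^2,\ldots,y^{d-1})\mapsto G^1(0,y^2,\ldots,y^{d-1})$ over a neighborhood of the origin in $\RR^{d-2}$. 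Composing further with the Lipschitz function $f_{+}$ (equivalently $f_-$) to recover the $x^0$ coordinate then exhibits $x(\E_{p,q})$ locally as the graph of a Lipschitz function, which is the content of the statement; the neighborhood $\bar{W}$ in the corollary can then be taken as the projection of $V'$ onto $\RR^{d-1}$.

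The main subtlety lies in applying the generalized Jacobian machinery in the second step: the set $\partial F(\mathbf{x}_0)$ is in general strictly larger than the ``Cartesian product'' of the generalized differentials of its components, so the sum (difference) rule $\partial(f_{+}-f_-)\subset\partial f_{+}-\partial f_-$ enters as an inclusion rather than an identity, and nonsingularity must be checked uniformly over the whole of $D$ rather than only on pointwise differences $\ud f_{+}-\ud f_-$ taken at points of common differentiability. The coordinate selection via Hahn--Banach is what turns the naked transversality hypothesis into this uniform nonsingularity condition.
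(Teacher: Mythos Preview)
Your proposal is correct and follows essentially the same approach as the paper, which simply gestures at the classical transversality-to-inverse-function-theorem reduction (citing Hirsch) and substitutes Clarke's Lipschitz inverse function theorem for the smooth one. You have supplied the details the paper omits --- in particular the Hahn--Banach step that converts the pairwise linear-independence hypothesis into a uniform nonsingularity condition on all of $\partial F(\mathbf{x}_0)$, and the Clarke-calculus inclusion $\partial(f_{+}-f_-)\subset\partial f_{+}-\partial f_-$ needed to control the first row of the generalized Jacobian; one cosmetic slip is that $V'$ already lies in $\RR^{d-1}$, so $\bar{W}$ is just $V'$ itself rather than a projection of it.
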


% In other words, $\EE_{p,q}$ is a locally Lipschitz submanifold of both 

%
%%%

Using the local graph description of Lipschitz submanifolds outlined above, we
can define the \emph{conormal cone} at a point $r$ of $\partial I^+(p)\sm\{p\}$
or $\partial I^-(q)\sm\{q\}$ as the convex hull of the limits of conormal 
directions along sequences of differentiable points converging to $r$.
Since any point of an achronal boundary which is not an endpoint of more than % reference!
one geodesic is differentiable, we see that the conormal cone of $\partial I^+(p)\sm\{p\}$
(resp. $\partial I^-(q)\sm\{q\}$) at each point is given by the convex hull of
covectors of the form $g(\dot{\gamma},.)$, where $\gamma$ is a future (resp. past)
directed null generator of $\partial I^+(p)\sm\{p\}$ (resp. $\partial I^-(q)\sm\{q\}$)
with respect to some affine parametrization. % improve this!

%%%

Now we are in a position to be more precise about the second problem, for it describes
exactly what happens at points where the intersection of $\partial I^+(p)$ with $\partial 
I^-(q)$ is not transverse even in the above generalized sense. It is clear that, 
if $\gamma:[a_-,b_{+}]\To\M$ is any null geodesic segment connecting $p\ll q$, there must 
be $a_-,<a_{+},b_-<b_{+}$ such that $p_{+}\doteq\gamma(a_{+})$ is a future endpoint of 
$\partial I^+(p)$ and $q_-\doteq\gamma(b_-)$ is a past endpoint of $\partial 
I^-(q)$. 

\begin{lemma}
Let $\gamma$ be as above and suppose that it is completely contained in $\partial\OO_{p,q}$. 
Then $p_{+}$ and $q_-$ must belong to $\E_{p,q}$, in which case we must have $b_-\leq a_{+}$.
Moreover, if $q_-\neq p_{+}$, then the whole null geodesic segment $\gamma\restr{[a_{+},b_-]}$ 
must belong to $\E_{p,q}$.
\end{lemma}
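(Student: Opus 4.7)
The plan is to exploit the partition $\partial\OO_{p,q}=\partial_+\OO_{p,q}\dot{\cup}\partial_-\OO_{p,q}\dot{\cup}\E_{p,q}$ from \eqref{eq3}, together with the defining property of the endpoints $p_+,q_-$: past the future endpoint of a null generator of $\partial I^+(p)$ the geodesic leaves the achronal boundary and falls into the open chronological interior $I^+(p)$, and symmetrically for $\partial I^-(q)$. Concretely, because any causal curve from $p$ reaching a point of $\partial I^+(p)$ must be an achronal null generator (a standard consequence of strong causality of the ambient globally hyperbolic region provided by Subsection~\ref{sec:2:gen}), one has $\gamma\restr{[a_-,a_+]}\subset\partial I^+(p)$ while $\gamma(t)\in I^+(p)$ for $t>a_+$; dually $\gamma\restr{[b_-,b_+]}\subset\partial I^-(q)$ and $\gamma(t)\in I^-(q)$ for $t<b_-$.

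To see $p_+\in\E_{p,q}$, I would let $t\searrow a_+$. By the preceding observation $\gamma(t)\in I^+(p)$, and since $\gamma(t)\in\partial\OO_{p,q}\subset\partial I^+(p)\cup\partial I^-(q)$, necessarily $\gamma(t)\in\partial I^-(q)$. Closedness of $\partial I^-(q)$ then yields $p_+\in\partial I^-(q)$, and combined with $p_+\in\partial I^+(p)$ this places $p_+$ in $\E_{p,q}$; the time-reversed argument with $t\nearrow b_-$ gives $q_-\in\E_{p,q}$.

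The inequality $b_-\le a_+$ follows by contradiction: if $a_+<b_-$ and $t\in(a_+,b_-)$, the two observations of the first paragraph simultaneously force $\gamma(t)\in I^+(p)$ and $\gamma(t)\in I^-(q)$, so $\gamma(t)\in\OO_{p,q}$, contradicting $\gamma\subset\partial\OO_{p,q}$. Finally, under the hypothesis $p_+\neq q_-$ one has $b_-<a_+$ strictly, and any $t\in[b_-,a_+]$ lies simultaneously on the generator of $\partial I^+(p)$ from $p$ (since $t\le a_+$) and on the generator of $\partial I^-(q)$ terminating at $q$ (since $t\ge b_-$), so $\gamma(t)\in\partial I^+(p)\cap\partial I^-(q)=\E_{p,q}$.

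The only step that I would isolate as the main technical hinge is the trichotomy ``on the generator / past the endpoint / in the interior'' invoked in the first paragraph, which rests on the fact that an achronal null geodesic issuing from $p$ cannot leave $\partial I^+(p)$ and later re-enter it. This is the standard achronal-boundary structure theorem in a strongly causal spacetime; once cited it dictates the entire argument, and the rest reduces to set-theoretic bookkeeping with $\partial I^\pm$.
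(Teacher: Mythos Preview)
Your argument is correct and rests on the same structural ingredient as the paper's --- the behaviour of a null generator of an achronal boundary at and beyond its endpoint --- but your execution is cleaner in two places. For the inequality $b_-\le a_+$, the paper argues that $a_+<b_-$ would produce a non-achronal null segment between two points of the achronal set $\partial I^+(p)$ (resp.\ $\partial I^-(q)$); your route is more direct: a parameter strictly between $a_+$ and $b_-$ would land simultaneously in $I^+(p)$ and $I^-(q)$, hence in the open set $\OO_{p,q}$, contradicting the containment in $\partial\OO_{p,q}$. For the final clause, the paper proceeds by contradiction (assuming a subinterval of $\gamma$ escapes $\E_{p,q}$ and deriving a violation of achronality), whereas you observe outright that for $t\in[b_-,a_+]$ the point $\gamma(t)$ sits on both generators and therefore in $\partial I^+(p)\cap\partial I^-(q)=\E_{p,q}$. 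Both approaches are valid; yours avoids the apparent $a_+/b_-$ index swap in the paper's third paragraph and is shorter. The one point worth making fully explicit (you do mention it) is that ``past the future endpoint $a_+$ the geodesic enters $I^+(p)$'': this follows because non-achronality of $\gamma\restr{[a_-,t]}$ for $t>a_+$ forces $p\ll\gamma(t)$, which you correctly identify as the only hinge.
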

\begin{proof}
Indeed, if either $p_{+}$ or $q_-$ do not belong to $\E_{p,q}$, then $\gamma$ must leave 
$\partial\OO_{p,q}$ at one of these points, in contradiction with the hypothesis.
If $a_{+}<b_-$, $\gamma\restr{[a_-,b_-]}$ (resp. $\gamma\restr{[a_{+},b_{+}]}$) would 
be a non achronal null geodesic segment connecting $p$ (resp. $p_{+}\in\partial I^-(q)$) to 
$q_-\in\partial I^+(p)$ (resp. $q$), violating the achronality of both $\partial I^+(p)$ and 
$\partial I^-(q)$. Suppose now that there are $a_{+}<c<d<b_-$ such that $\gamma(c),\gamma(d)
\in\E_{p,q}$ and $\gamma((c,d))\cap\E_{p,q}=\varnothing$. Then by hypothesis either $\gamma((c,d))
\subset \partial_-\OO_{p,q}$ or $\gamma((c,d)) \subset\partial_{+}\OO_{p,q}$. The first (resp. 
second) case can take place only if $\gamma$ leaves $\partial_{+}\OO_{p,q}$ (resp. 
$\partial_-\OO_{p,q}$) at $\gamma(c)$ and returns at $\gamma(d)$, which implies that $\gamma(d)$ 
is a past (resp. future) endpoint of $\partial_{+}\OO_{p,q}$ (resp. $\partial_-\OO_{p,q}$). 
However, this implies that $\gamma\restr{[c,b_{+}]}$ (resp. $\gamma\restr{[a_-,d]}$) is a non 
achronal null geodesic connecting $\gamma(c)\in\partial I^-(q)$ (resp. $p$) to $q$ (resp. 
$\gamma(d)\in\partial I^+(p)$), in contradiction with the achronality of $\partial I^-(q)$ 
(resp. $\partial I^+(p)$).%\qed
\end{proof}

Conversely, we have the following:

\begin{lemma}
Let $p\ll q$ belong to a causally simple spacetime $(\M,g)$. If $\E_{p,q}$ contains a
null geodesic segment $\gamma:[a,b]\To\M$, then $\gamma$ can be extended to a (non
achronal) null geodesic segment from $p$ to $q$ which is completely contained in 
$\partial\OO_{p,q}$ and whose intersection with $\E_{p,q}$ is a null geodesic segment 
whose past (resp. future) endpoint is a future (resp. past) endpoint of $\partial 
I^+(p)$ (resp. $\partial I^-(q)$) and whose interior intersects no null generator
of either $\partial I^+(p)$ or $\partial I^-(q)$ other than $\gamma$ itself.
\end{lemma}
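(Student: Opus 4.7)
The plan is to show that $\gamma$ must belong to a null generator of each of $\partial I^+(p)$ and $\partial I^-(q)$, that these two generators fit together as a single null geodesic from $p$ to $q$, and then to pinpoint the intersection of this geodesic with $\E_{p,q}$ by tracking where it leaves each achronal boundary.

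For the first step I would use that in a causally simple (hence chronological) spacetime with $p\ll q$ one has $p\in\partial I^+(p)\setminus\partial I^-(q)$ and $q\in\partial I^-(q)\setminus\partial I^+(p)$, so $\gamma([a,b])\subset(\partial I^+(p)\setminus\{p\})\cap(\partial I^-(q)\setminus\{q\})$. I then invoke the standard Lorentzian fact that any null geodesic contained in an achronal boundary coincides (as far as it extends) with one of the boundary's null generators: at an interior point of a generator the generator tangent is the unique null direction compatible with local achronality, so $\dot\gamma$ is proportional to that direction, and uniqueness of geodesics from initial data closes the argument. Applying this to $\partial I^+(p)$ promotes $\gamma$ to a generator from $p$; applying it to $\partial I^-(q)$ promotes it to a generator ending at $q$. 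Since the two share the common null subsegment $\gamma$, they are reparametrizations of a single maximal null geodesic $\tilde\gamma:[a',b']\to\M$ with $\tilde\gamma(a')=p$ and $\tilde\gamma(b')=q$.

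Non-achronality of $\tilde\gamma$ is immediate from $p\ll q$. To see $\tilde\gamma\subset\partial\OO_{p,q}$, I observe that every point $r$ on $\tilde\gamma$ satisfies $p\leq r\leq q$, hence $r\in\ol{\OO_{p,q}}=J^+(p)\cap J^-(q)$ by the causal simplicity of an enveloping globally hyperbolic open neighborhood; on the $p$-side of the $\E_{p,q}$-piece one has $r\in\partial I^+(p)$, ruling out $r\in I^+(p)$ and therefore $r\in\OO_{p,q}$, so $r\in\partial\OO_{p,q}$, and symmetrically on the $q$-side. For the intersection with $\E_{p,q}$, I set
\[
t^*_-\doteq\inf\{t\in[a',b']:\tilde\gamma(t)\in\partial I^-(q)\},\qquad t^*_+\doteq\sup\{t\in[a',b']:\tilde\gamma(t)\in\partial I^+(p)\},
\]
note that $[a,b]\subset[t^*_-,t^*_+]\subset(a',b')$, and check case-by-case that $\tilde\gamma(t)\in\E_{p,q}$ precisely for $t\in[t^*_-,t^*_+]$. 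Thus $\tilde\gamma\cap\E_{p,q}$ is a non-degenerate null geodesic segment, and by construction $\tilde\gamma(t^*_-)$ is the past endpoint of the generator of $\partial I^-(q)$ which $\tilde\gamma$ carries to $q$, while $\tilde\gamma(t^*_+)$ is the future endpoint of the generator of $\partial I^+(p)$ which $\tilde\gamma$ carries from $p$. The uniqueness clause then follows from the standard non-branching of null generators at interior points: for $t\in(t^*_-,t^*_+)$ the point $\tilde\gamma(t)$ is interior to the generator $\tilde\gamma|_{[a',t^*_+]}$ of $\partial I^+(p)$, and no other generator of $\partial I^+(p)$ can pass through it; symmetrically for $\partial I^-(q)$.

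The main obstacle is the rigorous version of the promotion step, i.e.\ ruling out that $\gamma$ might locally match one generator of $\partial I^+(p)$ near one of its points but a different generator near another. An open-closed argument on $[a,b]$, powered by the uniqueness of the null direction at interior points of generators and by the uniqueness of null geodesics from initial data, handles the generic situation; any isolated points of $[a,b]$ where $\gamma$ happens to cross a generator endpoint are then absorbed by continuity of $\tilde\gamma$.
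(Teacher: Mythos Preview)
Your argument is correct and follows essentially the same route as the paper: extend $\gamma$ along the null generators of $\partial I^+(p)$ and $\partial I^-(q)$ to obtain a single null geodesic $\tilde\gamma$ from $p$ to $q$, then identify the $\E_{p,q}$-portion as the overlap of the two generators, bounded by the past endpoint of the $\partial I^-(q)$ generator and the future endpoint of the $\partial I^+(p)$ generator. Your treatment is in places more explicit than the paper's (the containment in $\partial\OO_{p,q}$, the open--closed promotion step, the non-branching argument for the uniqueness clause), but the underlying ideas coincide.
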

\begin{proof}
If $\gamma$ is as in the hypothesis, then it can be past (resp. future) extended beyond 
$\gamma(a)$ (resp. $\gamma(b)$) up to $p$ (resp. $q$). Let $a_-$ and $b_{+}$ the values 
of the affine parameter of $\gamma$ such that $p=\gamma(a_-)$ and $q=\gamma(b_{+})$. It 
follows that $\gamma\restr{[a_-,b]}$ (resp. $\gamma\restr{[a,b_{+}]}$) is part of a null 
generator of $\partial_-\OO_{p,q}$ (resp. $\partial_{+}\OO_{p,q}$). However, the fully 
extended $\gamma$ ranging from $a_-$ to $p_{+}$ cannot be achronal, for $p\ll q$. This 
implies that there must be $a_-<a_{+}\leq a$, $b\leq b_-<b_{+}$ such that $\gamma(b_-)$ 
(resp. $\gamma(a_{+})$) is a future (resp. past) endpoint of the null generator $\gamma
\restr{[a_-,b_-]}$ (resp. $\gamma\restr{[a_{+},b_{+}]}$) of $\partial_-\OO_{p,q}$ (resp. 
$\partial_{+}\OO_{p,q}$), and $\gamma\restr{[a_-,b_{+}]}$ is the maximal extension of 
$\gamma$ that still belongs to $\E_{p,q}$. The very existence of such a $\gamma$ entails, 
in addition, that $\gamma(a_{+})$ (resp. $\gamma(b_-)$) is \emph{not} a future (resp. 
past) endpoint of $\partial_-\OO_{p,q}$ (resp. $\partial_{+}\OO_{p,q}$). This implies 
that no other generator of either $\partial_-\OO_{p,q}$ or $\partial_{+}\OO_{p,q}$ 
can cross $\gamma\restr{(a_-,b_{+})}$.%\qed
\end{proof}

The above results give us a picture about the most pathological points of $\E_{p,q}$. 
Now we show that, for ``most'' choices of pairs of points $p\ll q$, these points are
rather rare. 

\begin{proposition}
Let $(\M,g)$ be globally hyperbolic, $q\in\M$. Then the set of points $p\ll q$ such 
that there are only a finite number of future directed nonspacelike geodesics from $p$
to $q$ is residual in $I^-(q)$.
\end{proposition}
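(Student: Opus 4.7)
The plan is to invoke Sard's theorem for the exponential map $\exp_q$ together with the compactness, afforded by global hyperbolicity, of the space of past-directed causal geodesics from $q$ to any $p\in I^-(q)$. Future-directed causal geodesics from $p$ to $q$ correspond bijectively to past-directed causal vectors $v\in T_q\M$ with $\exp_q(v)=p$ (via $\gamma(t)=\exp_q((1-t)v)$), so the goal is to show that the set
\[
\mathcal{B}\doteq\{p\in I^-(q):\{v\in T_q\M:v\text{ past-causal},\exp_q(v)=p\}\text{ is infinite}\}
\]
is meager in the Baire space $I^-(q)$.

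First, I would show meagerness of the set
\[
N\doteq\exp_q\bigl(\{v\in\mathrm{dom}(\exp_q):v\text{ past-null}\}\bigr)\cap I^-(q)
\]
of points of $I^-(q)$ reachable from $q$ by some past-null geodesic. The past-null cone is a $(d-1)$-dimensional smooth submanifold of $T_q\M$, hence $\sigma$-compact, so its $\exp_q$-image is a countable union of compact subsets of $\M$ each of measure zero in the $d$-manifold $\M$ by a dimension count; being closed and of measure zero, each is nowhere dense in $\M$. An analogous Sard plus $\sigma$-compactness argument applied to the critical set $\Sigma\subset\mathrm{dom}(\exp_q)$ of $\exp_q$ shows that $\exp_q(\Sigma)$ is likewise meager in $\M$.

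The main step is then to verify the inclusion $\mathcal{B}\subseteq N\cup\exp_q(\Sigma)$. Given $p\in\mathcal{B}$ with distinct past-causal $v_n\in T_q\M$ satisfying $\exp_q(v_n)=p$, I claim $\{v_n\}$ lies in a compact subset of $T_q\M$: the $h$-norms $|v_n|_h$ (for an auxiliary Riemannian metric $h$) cannot tend to infinity, for otherwise a unit $h$-norm subsequential limit $\hat v_\infty$ of $v_n/|v_n|_h$ would produce a past-inextendible causal geodesic $\tau\mapsto\exp_q(\tau\hat v_\infty)$ imprisoned in the compact set $J^+(p)\cap J^-(q)$, contradicting strong causality. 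Extracting a convergent subsequence $v_n\to v_\infty$, continuity yields $\exp_q(v_\infty)=p$ and $v_\infty$ past-causal. If $v_\infty$ is past-null, $p\in N$; if $v_\infty$ is past-timelike, the accumulation of distinct preimages of $p$ at $v_\infty$ prevents $\exp_q$ from being a local diffeomorphism there (inverse function theorem), forcing $v_\infty\in\Sigma$ and $p\in\exp_q(\Sigma)$. Thus $\mathcal{B}\subseteq N\cup\exp_q(\Sigma)$ is meager and its complement in $I^-(q)$ is residual.

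The main technical obstacle is the compactness of $\{v_n\}$ in $T_q\M$ above -- specifically, the non-imprisonment of inextendible causal geodesics in compact subsets of a strongly causal spacetime, which is standard in globally hyperbolic Lorentzian geometry (cf.\ \cite{beemee}). Once this is granted, the remaining ingredients reduce to Sard's theorem coupled with the Baire-category observation that a closed set of measure zero in a second-countable manifold is nowhere dense.
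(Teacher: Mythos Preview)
Your proposal is correct and follows essentially the same approach as the paper's sketch: Sard's theorem to make the conjugate locus (critical values of $\exp_q$) meager, followed by the compactness/non-imprisonment argument of Proposition~10.42 in \cite{beemee} to force finiteness off that locus. You have in fact filled in the details the paper leaves implicit, and your separate treatment of the null-geodesic image $N$ is a useful refinement not spelled out in the paper's version.
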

\begin{proof}[Sketch]
By Sard's theorem, the set of points $p\ll q$ which are not conjugate along any
geodesic are residual in $\M$, so let us concentrate only on these. The argument
is completed along the lines of Proposition 10.42 of \cite{beemee}, pp. 364.
\end{proof}

The results of this Subsection can be summarized by the following

\begin{theorem}
Let $(\M,g)$ be globally hyperbolic, $q\in\M$. Then, for a residual set of points
$p\in I^-(q)$, the edge $\E_{p,q}$ of $\OO_{p,q}$ is a compact Lipschitz submanifold
of $\M$, up to a finite set of points and a finite set of null geodesic segments.\qed
\end{theorem}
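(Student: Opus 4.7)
The plan is to combine the Proposition with the two Lemmas and the Corollary, so that the final statement follows essentially by bookkeeping. I would first restrict attention to those $p\in I^-(q)$ lying in the residual set provided by the Proposition: for such $p$ there are only finitely many future directed causal geodesic segments from $p$ to $q$, which I will use as a global finiteness input. Since $\ol{\OO_{p,q}}$ is assumed to be contained in a globally hyperbolic open subset of $\M$ and is relatively compact, compactness of $\E_{p,q}=\partial I^+(p)\cap\partial I^-(q)$ is automatic. The task is then to describe the failure of the Lipschitz submanifold structure.

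Next I would use the conormal cone description established just before the two Lemmas: at any $r\in\E_{p,q}$, the conormal cone of $\partial I^+(p)$ (resp.\ $\partial I^-(q)$) is the convex hull of covectors $g(\dot\gamma,\cdot)$ taken over future (resp.\ past) directed null generators through $r$. Away from the finite set $F_{p,q}$ of points of $\E_{p,q}$ that are future endpoints of two or more null generators of $\partial I^+(p)$, or past endpoints of two or more null generators of $\partial I^-(q)$, these conormal cones reduce to single null rays; this finiteness follows from the residual choice of $p$, since each such multiple endpoint requires at least two distinct null geodesics from $p$ (or to $q$). At any $r\in\E_{p,q}\setminus F_{p,q}$ through which the two unique null generators $\gamma_+,\gamma_-$ are \emph{not} collinear, their tangent covectors are linearly independent, so $\partial I^+(p)$ and $\partial I^-(q)$ are transverse at $r$ in Clarke's generalized sense. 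The Corollary then gives a local Lipschitz graph description of $\E_{p,q}$ near $r$.

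The remaining bad locus is thus the set of $r\in\E_{p,q}\setminus F_{p,q}$ where $\gamma_+$ and $\gamma_-$ share a common tangent line, i.e.\ where $\gamma_+$ and $\gamma_-$ piece together into a single null geodesic through $r$ contained in $\partial\OO_{p,q}$. By Lemma 2 (the converse lemma), any such $r$ lies in a maximal null geodesic segment $\gamma:[a_-,b_+]\to\M$ from $p$ to $q$ contained in $\partial\OO_{p,q}$, whose intersection with $\E_{p,q}$ is a null geodesic segment $\gamma\restr{[a_+,b_-]}$; by Lemma 1 these segments are disjoint from the transverse locus and are controlled by the parameters $a_\pm,b_\pm$. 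Since each such segment is contained in a null geodesic joining $p$ to $q$, the residual condition on $p$ forces the number of these segments to be finite. Together with $F_{p,q}$ (also finite), this gives the claimed exceptional set.

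The step I expect to be the main obstacle is the transversality argument at points of $\E_{p,q}\setminus F_{p,q}$ not lying on any such null segment: one must verify that the Clarke generalized differentials $\partial f_+$ and $\partial f_-$ of the local graphing functions truly reduce to the single covectors coming from the unique generators, and that these being non-collinear forces generalized transversality (not just $\C^1$ transversality at density points). This is where the Rademacher-based convex-hull definition must be handled carefully, ruling out that the convex hull could contain an unexpected limit from other generators accumulating at $r$. The compactness of $\E_{p,q}$, the local finiteness of null generators away from endpoints (by strong causality in the ambient globally hyperbolic neighborhood), and the Proposition together exclude such accumulation, which closes the argument.
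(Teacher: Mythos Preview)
Your plan is precisely the paper's own: the Theorem is stated with a terminal \qed and is intended as nothing more than a summary of the preceding Corollary, the two Lemmas, and the Proposition, assembled in just the way you describe (transversality in Clarke's sense gives the Lipschitz submanifold structure via the Corollary; the obstruction to transversality is encoded by null geodesic segments in $\E_{p,q}$ via the two Lemmas; and the Proposition bounds the number of such segments).

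One slip worth flagging: your claim that the set $F_{p,q}$ of multiple-endpoint points on $\E_{p,q}$ is finite does \emph{not} follow from the Proposition in the way you say. The Proposition bounds the number of nonspacelike geodesics from $p$ to $q$, not from $p$ to arbitrary points of $\E_{p,q}$; a point $r\in\E_{p,q}$ that is a future endpoint of two null generators of $\partial I^+(p)$ gives two null geodesics from $p$ to $r$, not to $q$, and concatenating with a generator of $\partial I^-(q)$ yields only a broken causal curve. The paper does not address this point either (there is no proof beyond the \qed), and the ``finite set of points'' in the statement is most naturally read as the endpoints $\gamma(a_+),\gamma(b_-)$ of the finitely many null segments produced by the Lemmas, rather than as your $F_{p,q}$. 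Your final paragraph in fact anticipates this difficulty: the real work, which neither you nor the paper carries out, is to show that at multiple-endpoint points not lying on such a segment the generalized conormal cones are still linearly independent in Clarke's sense.
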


It is easy to see from \eqref{e1} that any Cauchy surface $\Sigma$ for $\OO_{p,q}$ has
$\E_{p,q}$ as its boundary \emph{but}, although $\OO_{p,q}$ is of the form $D(\Sigma)$ 
as a spacetime \emph{by itself}, it is \emph{not} of this form as a \emph{proper} 
region of $\M$, for generally $\partial_{+}\OO_{p,q}$ (resp. $\partial_-\OO_{p,q}$) 
possesses achronal geodesic segments with future (resp. past) endpoints, that is, 
there are points which belong to null generators which cease to be achronal before 
reaching $\E_{p,q}$. Hence, we 
have that $\mathrm{int}D(\Sc_{p,q})\supsetneqq\OO_{p,q}$ -- equality occurs if and only 
if the null geodesic segments which generate $\partial_{+}\OO_{p,q}$ and $\partial_-
\OO_{p,q}$ are achronal, as for example in the case that $\ol{\OO_{p,q}}$ is contained 
in some geodesically convex neighborhood.

\subsection{A remark on conformal isotropy groups of diamonds in Minkowski spacetime}\label{sec:2:mink}

Let us consider now the simplest case, that of diamonds in Minkowski spacetime $\RR^{1,d-1}$. 
Each $\OO_{p,q}$ possesses a one-parameter isotropy subgroup of the conformal group $SO(2,d)$ 
of $\RR^{1,d-1}$, given in standard Cartesian coordinates $x=(x^0=t,\mathbf{x})$, $\mathbf{x}=
(x^1,\ldots,x^{d-1}))$, $|\mathbf{x}|=r$ by
\begin{equation}\label{e4}
\begin{split}
x\mapsto u^\lambda_{p,q}(x) & = K^{-1}_{p,q}(e^\lambda K_{p,q}(x)),\mbox{ where}\\
K_{p,q}(x) & \doteq K\left(\Lambda_{p,q}\left(x-\frac{x(p)+x(q)}{2}\right)\right),
\end{split}
\end{equation}
where $\Lambda_{p,q}$ is the Lorentz boost around the origin making the direction 
$\overrightarrow{x(p)x(q)}$ parallel to the $x^0$ axis, $K(x)=I(x-(1,\mathbf{0}))-
\left(\frac{1}{2},\mathbf{0}\right)$ and $I(x)=-\frac{x}{\eta(x,x)}$ is the relativistic 
ray inversion map. That is, the discrete conformal transformation $K_{p,q}$ is a 
diffeomorphism of $\OO_{p,q}$ onto the forward light cone $I^+(0)$ which maps the 
(multiplicative) subgroup of dilations (written additively by putting the dilation in 
the exponential form $e^\lambda$) onto $\lambda\mapsto u^\lambda_{p,q}$ by conjugation. 
Since conformal transformations preserve the causal structure of strongly causal
 spacetimes % (see the discussion after Remark \ref{r1}) 
and the orbits of $\lambda
\mapsto e^\lambda\cdot$ in $I^+(0)$ are everywhere timelike, so are the orbits of 
$\lambda\mapsto u^\lambda_{p,q}$ in $\OO_{p,q}$.

Another description of $\lambda\mapsto u^\lambda_{p,q}$ can be given as follows.
Consider the radial null coordinates $x^-=t-r$ (\emph{retarded time}) and $x^-=t+r$ 
(\emph{advanced time}), whose level sets are respectively the forward and backward light cones 
$\partial I^+((t,\mathbf{0}))$ and $\partial I^-((t,\mathbf{0}))$. In the case 
that $p=p_0\doteq (-1,\mathbf{0})$ and $q=q_0\doteq (1,\mathbf{0})$, $u^\lambda_{p_0,q_0}$ 
acts \emph{only on the coordinates} $x^\pm$, and \eqref{e4} can be simplified to 
\[
x^{\pm}(u^\lambda_{p_0,q_0}(p))\doteq\frac{(1+x^{\pm})-e^{-\lambda}(1-x^{\pm})}{(1+x^{\pm})+e^{-\lambda}
(1-x^{\pm})},\,\lambda\in\RR.
\] 
This leads to a quite appealing, intrinsic geometric 
characterization of the parameter $\lambda$, which we state already in coordinate-independent 
form:

\begin{theorem}\label{t1}
If $\lambda$ is the parameter of $u^\lambda_{p,q}$, seen as a \emph{global time function}
on $\OO_{p,q}$, then \begin{equation}\label{e5}\lambda(r)=\frac{1}{2}\log\left[
\frac{d_{\eta}(p,r)^2}{d_{\eta}(r,q)^2}\right]=\frac{1}{d}\log\left[\frac{\mbox{Vol}
\OO_{p,r}}{\mbox{Vol}\OO_{r,q}}\right],\end{equation} where $d^2_{\eta}(r,q)$ is the 
square of the \emph{Lorentzian distance} associated to the Minkowski spacetime metric 
$\eta$.
\end{theorem}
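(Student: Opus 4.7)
The plan is to verify \eqref{e5} first in the canonical diamond $\OO_{p_0,q_0}$ with $p_0=(-1,\mathbf{0})$, $q_0=(1,\mathbf{0})$, then promote it to arbitrary $p\ll q$ in $\RR^{1,d-1}$ by conformal covariance, and finally convert the distance ratio into the volume ratio. In the null coordinates $x^\pm$ introduced just before the theorem, direct evaluation of $-\eta(r-p_0,r-p_0)$ and $-\eta(q_0-r,q_0-r)$ yields the factorizations
\[
d^2_\eta(p_0,r)=(1+x^+(r))(1+x^-(r)), \qquad d^2_\eta(r,q_0)=(1-x^+(r))(1-x^-(r)).
\]
Substituting the explicit formula for $x^\pm(u^\mu_{p_0,q_0}(r))$ from the excerpt, one verifies
\[
1+x^\pm(u^\mu_{p_0,q_0}(r))=\frac{2(1+x^\pm(r))}{D^\pm_\mu(r)}, \qquad 1-x^\pm(u^\mu_{p_0,q_0}(r))=\frac{2e^{-\mu}(1-x^\pm(r))}{D^\pm_\mu(r)},
\]
with $D^\pm_\mu(r)\doteq(1+x^\pm(r))+e^{-\mu}(1-x^\pm(r))$. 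The common factor $D^+_\mu(r)D^-_\mu(r)$ cancels in the ratio, leaving
\[
\frac{d^2_\eta(p_0,u^\mu_{p_0,q_0}(r))}{d^2_\eta(u^\mu_{p_0,q_0}(r),q_0)}=e^{2\mu}\,\frac{d^2_\eta(p_0,r)}{d^2_\eta(r,q_0)},
\]
so the function $f(r)\doteq\tfrac{1}{2}\log[d^2_\eta(p_0,r)/d^2_\eta(r,q_0)]$ satisfies $f\circ u^\mu_{p_0,q_0}=f+\mu$, exhibiting it as a flow parameter for $u^\mu_{p_0,q_0}$.

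To identify $f$ with the specific time function $\lambda$, I would invoke the time reversal involution $\sigma:(t,\mathbf{x})\mapsto(-t,\mathbf{x})$ of $\OO_{p_0,q_0}$, which swaps $p_0\leftrightarrow q_0$ and conjugates the flow to $u^{-\mu}$. This forces both $\lambda\circ\sigma=-\lambda$ and $f\circ\sigma=-f$ (the latter by the swap inside the logarithm), so both functions vanish on the symmetric Cauchy surface $\{t=0\}=\mathrm{Fix}(\sigma)$. Combined with the shared shift-by-$\mu$ property along the orbits of $u^\mu_{p_0,q_0}$, this agreement on a Cauchy surface transverse to every orbit forces $f=\lambda$ on the whole canonical diamond. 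For arbitrary $p\ll q$, take the conformal isometry $\phi_{p,q}$ (a translation, a Lorentz boost, and an isotropic dilation) sending $(p,q)\mapsto(p_0,q_0)$: it conjugates $u^\lambda_{p,q}$ to $u^\lambda_{p_0,q_0}$ by the very definition of $K_{p,q}$, and the ratio $d^2_\eta(p,r)/d^2_\eta(r,q)$ is invariant under $\phi_{p,q}$ (Poincar\'e invariance of $d_\eta$ plus cancellation of the common dilation factor in numerator and denominator). Thus \eqref{e5} transfers from $\OO_{p_0,q_0}$ to $\OO_{p,q}$ verbatim.

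The second equality in \eqref{e5} is an elementary volume computation: Poincar\'e-aligning $p\ll r$ to $p=0$, $r=(T,\mathbf{0})$ with $T=d_\eta(p,r)$, the diamond $\OO_{p,r}$ is foliated at height $t\in[0,T]$ by Euclidean $(d-1)$-balls of radius $\min(t,T-t)$, and a direct integration yields $\mathrm{Vol}\,\OO_{p,r}=C_d\,d_\eta(p,r)^d$ with an explicit dimension-dependent constant $C_d$. This constant cancels in $\mathrm{Vol}\,\OO_{p,r}/\mathrm{Vol}\,\OO_{r,q}$, so $\tfrac{1}{d}\log$ of the ratio equals $\log[d_\eta(p,r)/d_\eta(r,q)]=\tfrac{1}{2}\log[d^2_\eta(p,r)/d^2_\eta(r,q)]$ as required. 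The most delicate step is the normalization argument in the canonical case: the flow-shift identity alone pins $f$ down only up to an orbit-dependent constant, so the time-reversal symmetry argument identifying the common zero Cauchy surface is what closes the proof.
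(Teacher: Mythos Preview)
Your proof is correct and reaches the same conclusion, but the route differs from the paper's in a few instructive ways. The paper restricts to the $x^0$--$x^1$ plane by rotational symmetry, takes a generic point on the $\{t=0\}$ slice, pushes it forward by $u^\lambda_{p_0,q_0}$, and then computes $(1\pm x^0)^2-(x^1)^2$ for the flowed point explicitly in Cartesian coordinates, reading off the ratio $e^{2\lambda}$; the volume identity is obtained by using Lorentz boosts about $p_0$ and $q_0$ to reduce an arbitrary sub-diamond to one with apex on the time axis. Your argument instead exploits the null-coordinate factorizations $d_\eta^2(p_0,r)=(1+x^+)(1+x^-)$ and $d_\eta^2(r,q_0)=(1-x^+)(1-x^-)$ to derive the shift law $f\circ u^\mu=f+\mu$ in one line, and then invokes the time-reversal involution $\sigma$ to fix the additive freedom. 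This is cleaner computationally and makes explicit the normalization step (that the zero level set is the $\sigma$-fixed surface $\{t=0\}$), which the paper's proof takes for granted by starting on $\{t=0\}$. You are also more explicit than the paper about transferring the identity to general $p\ll q$ via a Poincar\'e-plus-dilation map; the paper proves only the canonical case and leaves the passage to arbitrary diamonds to the definition \eqref{e4}. Both approaches are equally valid; yours trades explicit coordinate computation for a structural symmetry argument, at the mild cost of having to justify that $\lambda\circ\sigma=-\lambda$ (which follows once one checks $\sigma\, u^\mu\,\sigma=u^{-\mu}$, as you assert).
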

\begin{proof}
Due to the rotational symmetry of $u^\lambda_{p_0,q_0}$, we can restrict our 
considerations to the $x^0-x^1$ plane. In this case, let us write $x^\mu(p)=(0,x^1,0,\ldots,0)$, 
$x^1\in[-1,1]$, whence it follows that 
\[
x^{\pm}(u^\lambda_{p_0,q_0}(p))=\frac{(1\pm x^1)-e^{-\lambda}(1\mp x^1)}{(1\pm x^1)+e^{-\lambda}(1\mp x^1)},
\] 
and thus 
\begin{align*}
x^0(u^\lambda_{p_0,q_0}(p)) & =\frac{1}{2}(x^+(u^\lambda_{p_0,q_0}(p))+x^-(u^\lambda_{p_0,q_0}(p)))\\ 
 & =\frac{(1-(x^1)^2)(1-e^{-2\lambda})}{(1-(x^1)^2)(1+e^{-2\lambda})+2e^{-\lambda}(1+(x^1)^2)}
\end{align*}
and 
\begin{align*}
x^1(u^\lambda_{p_0,q_0}(p)) & =\frac{1}{2}(x^+(u^\lambda_{p_0,q_0}(p))-x^-(u^\lambda_{p_0,q_0}(p)))\\
 & = \frac{4x^1e^{-\lambda}}{(1-(x^1)^2)(1+e^{-2\lambda})+2e^{-\lambda}(1+(x^1)^2)}.
\end{align*}

Let $q=(t,\mathbf{0})$, $t\in(-1,1)$. The diamond $\OO_{p_0,q}$ is a translation
of $\frac{1+t}{2}\OO_{p_0,q_0}$, and the diamond $\OO_{q,q_0}$, a translation
of $\frac{1-t}{2}\OO_{p_0,q_0}$ -- hence, we have $|\OO_{p_0,q}|=\left(\frac{1+t}{2}
\right)^{d-1}|\OO_{p_0,q_0}|$ and $|\OO_{q,q_0}|=\left(\frac{1-t}{2}\right)^{d-1}
|\OO_{p_0,q_0}|$. More generally, if $q=(t,r,0,$\\$\ldots,0)$, there exists a Lorentz 
boost in the $x^0-x^1$ plane around $p_0$ which takes $\OO_{p_0,q}$ to $\OO_{p_0,q^+}$, 
where $q^+=(((1+t)^2-r^2)^{\frac{1}{2}}-1,\mathbf{0})$, and a Lorentz boost in the $x^0-x^1$ 
plane around $q_0$ which takes $\OO_{p_0,q}$ to $\OO_{q^-,q_0}$, where $q^-=(1-((1-t)^2-
r^2)^{\frac{1}{2}},\mathbf{0})$. As Lorentz transformations preserve volume, we have 
\[
|\OO_{p_0,q}|=\left(\frac{((1+t)^2-r^2)^{\frac{1}{2}}}{2}\right)^d|\OO_{p_0,q_0}|
\] 
and 
\[
|\OO_{q,q_0}|=\left(\frac{((1-t)^2-r^2)^{\frac{1}{2}}}{2}\right)^d|\OO_{p_0,q_0}|.
\] 
Finally, taking $q=q(\lambda)=u^\lambda_{p_0,q_0}(p)$, it follows that 
\[
(1+x^0(q(\lambda)))^2-x^1(q(\lambda))^2=\frac{4(1-(x^1)^2)}{(1-(x^1)^2)(1+e^{-2\lambda})+2e^{-\lambda}(1+(x^1)^2)}
\] 
and 
\[(1-x^0(q(\lambda)))^2-x^1(q(\lambda))^2=\frac{4e^{-2\lambda}(1-(x^1)^2)}{(1-(x^1)^2)(1+e^{-2\lambda})+2e^{-\lambda}
(1+(x^1)^2)},
\] 
and hence formula \eqref{e5}. The second identity follows from the fact that $|\OO_{p,q}|=
\frac{1}{2^{d-1}d(d-1)}\mbox{\upshape Vol}S^{d-2}d_\eta(p,q)^d$.%\qed
\end{proof}

\section{Cosmological time functions in general diamonds}\label{sec:3}

\subsection{The structure in the large}\label{sec:3:large}

Let us denote by 
\begin{equation}\label{e6}
\log\frac{d_g(p,\cdot)}{d_g(\cdot,q)}\doteq\lambda^g_{p,q}:\OO_{p,q}\rightarrow\RR
\end{equation}
the function given by the first identity in \eqref{e5} with the Lorentzian distance 
$d_\eta$ associated to the Minkowski metric $\eta$ replaced by the one associated to $g$,
henceforth called $d_g$ (see below). 

\begin{remark}\label{s3r1}
In Minkowski spacetime, the second identity in \eqref{e5} equates $\lambda^g_{p,q}$, up to a 
constant factor, to the global time function originally built in \cite{geroch}. For diamonds in 
\emph{general}, causally simple spacetimes, the second identity no longer holds, due to curvature 
effects -- the intuitive reason is that the ``packing'' number of small diamonds inside a larger 
one need not grow linearly with the volume of the latter. This heuristic argument can be made 
rigorous by employing semi-Riemannian volume comparison estimates \cite{ehrsan}. % remember AdS!
\end{remark}

Although $\lambda^g_{p,q}$ suggests itself as a natural choice of global time 
function for $\OO_{p,q}$ for general spacetimes, the remark at the beginning of Section 
\ref{sec:2} raises concerns about whether the level sets of $\lambda^g_{p,q}$ are 
Cauchy surfaces or not. Before proceeding any further, let us recall the general 
definition of $d_g$:

\begin{equation}\label{e7}
d_g(p,q)\doteq\left\{\begin{array}{cl}{{\mbox{\normalsize sup}}\atop{\mbox{\scriptsize 
$\gamma\in\Omega_{p,q}$}}}\sum^k_{i=1}\int^{\lambda_i}_{\lambda_{i-1}}\sqrt{-g(\dot{\gamma}_i
(\lambda),\dot{\gamma}_i(\lambda))}d\lambda & \mbox{if }p\leq q \\ 0 & \mbox{otherwise}
\end{array}\right.,
\end{equation}
where $\Omega_{p,q}$ is the set of future directed, piecewise $\C^\infty$ causal 
curves from $p$ to $q$. $d_g$ as defined in \eqref{e7} enjoys the fundamental 
\emph{reverse triangular inequality}
\begin{equation}\label{e8}
d_{\bar{g}}(p,q)\geq d_{\bar{g}}(p,r)+d_{\bar{g}}(r,q).
\end{equation}

With \eqref{e7} at our disposal, we can prove the following
\begin{proposition}\label{p1}
The level sets of 
\[
\lambda^g_{p,q}=\frac{1}{2}\log\frac{d_g(p,\cdot)^2}{d_g(\cdot,q)^2}
\] 
are acausal.
\end{proposition}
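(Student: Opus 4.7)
My plan is to prove the equivalent statement that $\lambda^g_{p,q}$ is \emph{strictly} increasing along every future-directed causal curve in $\OO_{p,q}$: if this holds, two distinct points on the same level set cannot be causally related, which is exactly acausality. Concretely, given $r_1, r_2 \in \OO_{p,q}$ with $r_1 \leq r_2$ and $r_1 \neq r_2$, I aim to establish
\[
d_g(p, r_2) > d_g(p, r_1) \quad \mbox{and} \quad d_g(r_1, q) > d_g(r_2, q),
\]
which together force $\lambda^g_{p,q}(r_1) < \lambda^g_{p,q}(r_2)$ by the definition \eqref{e6}.

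First I would dispatch the easy case $r_1 \ll r_2$: then $d_g(r_1, r_2) > 0$, and the reverse triangle inequality \eqref{e8} yields immediately
\[
d_g(p, r_2) \geq d_g(p, r_1) + d_g(r_1, r_2) > d_g(p, r_1),
\]
with the symmetric inequality for $q$. The delicate case is $r_1 \leq r_2$ with $r_1 \not\ll r_2$: strong causality of $\OO_{p,q}$ forces every causal curve from $r_1$ to $r_2$ to be a reparametrization of a null geodesic, so there is a null geodesic segment $\beta$ from $r_1$ to $r_2$ and $d_g(r_1, r_2) = 0$, making the triangle inequality alone insufficient.

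To handle this delicate case I would use a ``round the corner'' argument. Global hyperbolicity of $\OO_{p,q}$ (Subsection \ref{sec:2:gen}), applied in the ambient globally hyperbolic region containing $\ol{\OO_{p,q}}$, furnishes via Avez--Seifert a maximizing timelike geodesic $\alpha$ from $p$ to $r_1$ of length $d_g(p, r_1)$. The concatenation $\alpha * \beta$ is then a causal curve from $p$ to $r_2$ whose tangent jumps from timelike to null at $r_1$. In a convex normal neighborhood $V$ of $r_1$ I pick $a \in V$ on $\alpha$ slightly before $r_1$ and $b \in V$ on $\beta$ slightly after $r_1$; since $a \ll r_1 \leq b$ in a strongly causal spacetime implies $a \ll b$, there is a unique timelike geodesic $\gamma_V$ in $V$ from $a$ to $b$, and this $\gamma_V$ strictly maximizes length among causal curves from $a$ to $b$ contained in $V$. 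Replacing the piece of $\alpha * \beta$ between $a$ and $b$ by $\gamma_V$ produces a causal curve from $p$ to $r_2$ whose length strictly exceeds that of $\alpha$, namely $d_g(p, r_1)$, and hence $d_g(p, r_2) > d_g(p, r_1)$. A mirror corner-rounding at $r_2$ (using a maximizing timelike geodesic from $r_2$ to $q$ together with $\beta$) gives $d_g(r_1, q) > d_g(r_2, q)$. The main obstacle is precisely this rounding step, which relies on the strict local maximality of timelike geodesics in convex normal neighborhoods and on the availability of global maximizers provided by Avez--Seifert in the ambient globally hyperbolic region; everything else is a direct consequence of \eqref{e8}.
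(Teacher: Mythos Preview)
Your proposal is correct and follows essentially the same route as the paper: both arguments reduce to showing that $d_g(p,\cdot)$ is strictly increasing and $d_g(\cdot,q)$ strictly decreasing along any future-directed causal curve in $\OO_{p,q}$, and both handle the delicate null case by observing that a maximizing timelike geodesic from $p$ to $r_1$ concatenated with the null segment $\beta$ is a broken causal curve, hence not maximal, forcing $d_g(p,r_2)>d_g(p,r_1)$. Your corner-rounding in a convex normal neighborhood merely makes explicit what the paper compresses into the remark that a broken causal curve ``for no reparametrization can be made a smooth geodesic'' and therefore cannot realize the Lorentzian distance.
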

\begin{proof}
Since $\frac{1}{2}\log$ is injective and strictly monotonically increasing,
it suffices to establish the claim for the ratio $\frac{d_g(p,\cdot)}{d_g(\cdot,q)}$.
It follows from \eqref{e8} that, if $\gamma:[0,1]\To\M$ is a future directed 
causal curve segment in $\OO_{p,q}$, then $\lambda\mapsto d_g(\gamma(\lambda),q)$ 
(resp. $\lambda\mapsto d_g(p,\gamma(\lambda))$) is a function bounded by $d_g(p,q)$ 
($<+\infty$ by virtue of the compactness of $\ol{\OO_{p,q}}$ and the definition of 
Lorentzian distance) and \emph{strictly} decreasing (resp. increasing) in $\lambda$. 
To see the latter fact, recall that any \emph{maximal} causal curve (i.e. whose 
arc length between any two of its points is equal to the Lorentzian distance) is 
necessarily a smooth geodesic, up to reparametrization \cite{beemee}. Therefore, 
even if $\gamma$ is an achronal null geodesic segment (and thus $d_g(\gamma(0),
\gamma(1))=0$), the fact that $\gamma(0),\gamma(1)\in\OO_{p,q}$ implies that any 
future directed causal curve segment linking either $p$ to $\gamma(1)$ or 
$\gamma(0)$ to $q$, and containing $\gamma$, will either explicitly have a larger 
Lorentzian arc length or be a \emph{broken} causal curve segment, which for no 
reparametrization can be made a smooth geodesic. 

The above argument shows particularly that the level sets of $d_g(.,q)$ and 
$d_g(p,.)$ are \emph{acausal} at nonzero values. To reach the conclusion of 
the theorem, we argue by \emph{reductio ad absurdum} and assume that 
\[
\frac{d_g(p,\gamma(0))}{d_g(\gamma(0),q)}=\frac{d_g(p,\gamma(1))}{d_g(\gamma(1),q)}.
\] 
This implies that 
\[
d_g(p,\gamma(0))=\frac{d_g(\gamma(0),q)}{d_g(\gamma(1),q)}d_g(p,\gamma(1)).
\] 
However, we have seen that $d_g(p,\gamma(0))<d_g(p,\gamma(1))$ and 
$d_g(\gamma(0),q)>d_g(\gamma(1),q)$ -- the second inequality implies that 
$d_g(p,\gamma(0))>d_g(p,\gamma(1))$, in contradiction with the first inequality.%\qed
\end{proof}

As for Cauchy property of the level sets proper, notice that, due to strong causality, 
any inextendible causal curve $\gamma:(0,1)\To\OO_{p,q}$ has a past endpoint $\gamma_-$ 
in $I^-(q)\cap\partial I^+(p)$ and a future endpoint $\gamma_+$ in $I^+(p)\cap\partial I^-(q)$. 
In this case, we have 
\[
d_g(p,\gamma_-)=d_g(\gamma_+,q)=0,\mbox{ and }d_g(\gamma_-,q),d_g(p,\gamma_+)\neq 0.
\] 
Continuity of $\gamma$ will then assure that $\frac{d_g(p,\gamma(t))}{d_g(\gamma(t),q)}$ 
assume all possible values in $(0,+\infty)$, \emph{provided} that the latter ratio is 
continuous \emph{as well}, which on its turn is a consequence of the fact that $\ol{\OO_{p,q}}$ 
is assumed to be contained in a globally hyperbolic region of $(\M,g)$, in which $d_g$ is 
thus jointly continuous and finite valued (Lemma 4.5, page 140 in \cite{beemee}). This yields
\begin{corollary}\label{c1}
$\lambda^g_{p,q}$ is continuous and the level sets of $\lambda^g_{p,q}:\OO_{p,q}\To\RR$ are 
acausal Cauchy surfaces.\qed
\end{corollary}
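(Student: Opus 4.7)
The plan is to assemble three ingredients that are already essentially in place: (i) joint continuity of $d_g$ on the compact closure $\ol{\OO_{p,q}}$; (ii) acausality of the level sets from Proposition \ref{p1}; and (iii) the limiting behavior of $\lambda^g_{p,q}$ as one approaches $\partial\OO_{p,q}$ along an inextendible causal curve, sketched in the paragraph preceding the statement.

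For continuity, the standing assumption that $\ol{\OO_{p,q}}$ sits inside an open globally hyperbolic region of $(\M,g)$ gives, by Lemma 4.5 of \cite{beemee}, joint continuity and finiteness of $d_g$ on $\ol{\OO_{p,q}}\times\ol{\OO_{p,q}}$. For every $r\in\OO_{p,q}$ we have $p\ll r\ll q$, hence $d_g(p,r)>0$ and $d_g(r,q)>0$. Therefore $d_g(p,\cdot)/d_g(\cdot,q)$ is a continuous, strictly positive function on $\OO_{p,q}$, and $\lambda^g_{p,q}$ is continuous as the composition with $\log$.

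For the Cauchy property, let $\gamma:(0,1)\To\OO_{p,q}$ be an inextendible future-directed causal curve. By strong causality and compactness of $\ol{\OO_{p,q}}$, $\gamma$ admits continuous past and future endpoints $\gamma_\pm\in\partial\OO_{p,q}$. The strict monotonicity established in the proof of Proposition \ref{p1}---$t\mapsto d_g(p,\gamma(t))$ strictly increasing and $t\mapsto d_g(\gamma(t),q)$ strictly decreasing, both strictly positive on $(0,1)$---forces
\[
\lim_{t\To 0^+}d_g(\gamma(t),q)=\sup_{t\in(0,1)}d_g(\gamma(t),q)>0,
\]
so $\gamma_-\in I^-(q)$, and by the symmetric argument $\gamma_+\in I^+(p)$. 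Given the decomposition \eqref{eq3}, this pins down $\gamma_-\in\partial_-\OO_{p,q}$ and $\gamma_+\in\partial_+\OO_{p,q}$, in particular ruling out the potentially problematic situation $\gamma_\pm\in\E_{p,q}$, which would produce an indeterminate $0/0$ ratio. Consequently, $d_g(p,\gamma_-)=0$ and $d_g(\gamma_+,q)=0$ while $d_g(\gamma_-,q),d_g(p,\gamma_+)>0$, so that $\lambda^g_{p,q}(\gamma(t))\To-\infty$ as $t\To 0^+$ and $\lambda^g_{p,q}(\gamma(t))\To+\infty$ as $t\To 1^-$. Continuity together with the intermediate value theorem then shows that $\lambda^g_{p,q}\circ\gamma$ attains every real value, while Proposition \ref{p1} ensures each value is attained at most once. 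Hence every level set of $\lambda^g_{p,q}$ meets each inextendible causal curve in $\OO_{p,q}$ exactly once.

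The step I expect to be the most delicate is the exclusion of endpoints lying on $\E_{p,q}$; it is precisely the strict monotonicity combined with strict positivity interior to $\OO_{p,q}$---rather than mere monotonicity---that forbids this, and this is why the refined analysis carried out in the proof of Proposition \ref{p1} (distinguishing achronal null geodesic segments from broken ones) is indispensable here.
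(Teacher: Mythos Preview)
Your proof is correct and follows essentially the same route as the paper: continuity of $d_g$ from the ambient globally hyperbolic region (Lemma 4.5 of \cite{beemee}), acausality from Proposition \ref{p1}, endpoints of inextendible causal curves on $\partial_\pm\OO_{p,q}$, and the intermediate value theorem. Your explicit argument excluding $\gamma_\pm\in\E_{p,q}$ via strict monotonicity and positivity of $d_g(\gamma(t),q)$, $d_g(p,\gamma(t))$ is in fact more careful than the paper's own treatment, which simply asserts the endpoint locations without justification.
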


Actually, one can say even more. Under our hypotheses on $p,q$, we have that the part of
the future non-spacelike cut locus of $p$ and the part of the past non-spacelike cut locus
of $q$ within $\ol{\OO}_{p,q}$ are closed \cite{beemee}. Moreover, by definition any open 
neighborhood of a future non-spacelike cut point of $p$ contains a point which can be
connected to $p$ by a unique maximal past directed causal geodesic segment; likewise, 
any open neighborhood of a past non-spacelike cut point of $q$ contains a point which 
can be connected to $q$ by a unique maximal future directed causal geodesic segment.
In other words, 
\begin{lemma}
Given $p\ll q$ contained in a globally hyperbolic region of $(\M,g)$, the complement of 
the union of the future non-spacelike cut locus of $q$ and the past non-spacelike cut locus
of $q$ in $\ol{\OO}_{p,q}$ is open and dense in the relative topology.\qed
\end{lemma}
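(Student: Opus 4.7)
The plan is to handle openness and density separately.

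For openness, I would simply appeal to the paragraph preceding the lemma, which records (citing \cite{beemee}) that both $C^+(p)\cap\ol{\OO}_{p,q}$ and $C^-(q)\cap\ol{\OO}_{p,q}$ are closed in $\ol{\OO}_{p,q}$, where $C^+(p)$, $C^-(q)$ denote the respective non-spacelike cut loci. Since the union of two closed sets is closed, its complement in the relative topology of $\ol{\OO}_{p,q}$ is open.

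For density, the plan is to invoke the Baire category theorem. Using again the preceding paragraph, which explicitly states that every open neighborhood of a future non-spacelike cut point of $p$ contains a point joined to $p$ by a unique maximal past-directed causal geodesic segment — and hence a point outside $C^+(p)$ — one concludes that $\ol{\OO}_{p,q}\sm(C^+(p)\cap\ol{\OO}_{p,q})$ is dense in $\ol{\OO}_{p,q}$; combined with the openness from the previous step, it is open and dense. Running the dual argument with reversed causal orientation yields that $\ol{\OO}_{p,q}\sm(C^-(q)\cap\ol{\OO}_{p,q})$ is open and dense as well.

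To finish, I would observe that $\ol{\OO}_{p,q}$ is compact by our standing relative compactness hypothesis, hence a locally compact Hausdorff space, hence a Baire space. The intersection of two open and dense subsets of a Baire space is dense, and this intersection is precisely the complement of $(C^+(p)\cup C^-(q))\cap\ol{\OO}_{p,q}$ in $\ol{\OO}_{p,q}$, which is the set featuring in the statement. The only genuine technical input is the pair of foundational facts on the non-spacelike cut locus (closedness, and approximation of cut points by non-cut points) borrowed from \cite{beemee}; both apply here because $\ol{\OO}_{p,q}$ is assumed to sit inside a globally hyperbolic open subset of $\M$, so the main point of the argument is really just organizing these inputs through Baire — no serious obstacle is expected.
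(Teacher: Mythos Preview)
Your proposal is correct and follows essentially the same route as the paper, which simply records the two inputs from \cite{beemee} (closedness of each cut-locus piece in $\ol{\OO}_{p,q}$, and approximability of cut points by non-cut points) in the paragraph preceding the lemma and then declares the result with a \qed. The one superfluous step is the appeal to Baire: the intersection of \emph{finitely many} open dense subsets of any topological space is dense by an elementary argument, so compactness of $\ol{\OO}_{p,q}$ plays no role here.
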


This also implies that $\lambda^g_{p,q}$ is smooth almost everywhere, for the set of 
non-spacelike cut points of $p$ and $q$ within $\ol{\OO}_{p,q}$ is then compact and 
nowhere dense. 

% As proven in Appendix C of \cite{moretti}, the future 
% non-spacelike cut locus of $p$ and the past non-spacelike cut locus of $q$ have zero 
% $\mu_g$-measure in any globally hyperbolic region containing $\ol{\OO_{p,q}}$, and hence 
% $\lambda^g_{p,q}$ is even almost everywhere smooth with respect to the volume element $\mu_g$ 
% associated to $g$. We will derive now a weaker regularity property of $\lambda^g_{p,q}$ which
% doesn't depend on the result of \cite{moretti}, using a different method which provides 
% us on the other hand some important estimates on $\lambda^g_{p,q}$ related to a priori 
% pointwise bounds on the sectional curvature which are of independent interest. This method 
% is based on the following result, adapted from Proposition 3.1 of \cite{angalho} to our 
% context:

\begin{proposition}\label{p2}
$d_g(p,.)$ and $d_g(.,q)$ are \emph{semi-convex}, i.e., for any $r\in\OO_{p,q}$ there exists a 
neighborhood  $\UU\ni r$, a local chart $x:\UU\To\RR^d$ and $f\in\C^\infty(\UU)$ such that 
$(d_g(p,.)\restr{\UU}+f)\circ x^{-1}$ and $(d_g(.,q)\restr{\UU}+f)\circ x^{-1}$ are \emph{convex} 
in $x(\UU)$.
\end{proposition}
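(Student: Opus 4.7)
The plan is to realize $d_g(p,\cdot)$ (the argument for $d_g(\cdot,q)$ is identical after time reversal) locally as the supremum of a family of smooth lower supports whose coordinate Hessians share a common lower bound. Since the pointwise supremum of a family of convex functions is convex, semi-convexity then follows by adding $f(r')\doteq\tfrac{C}{2}|x(r')|^2$ with $C$ dominating that lower bound. The natural source of the smooth lower supports is the reverse triangle inequality \eqref{e8}.

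Concretely, I fix $r\in\OO_{p,q}$ and choose a relatively compact coordinate chart $x:\UU\to\RR^d$ with $\ol{\UU}\subset\OO_{p,q}$. For each $r_0\in\ol{\UU}$, global hyperbolicity of the enclosing region produces a maximizing future directed timelike geodesic $\gamma_{r_0}$ from $p$ to $r_0$. Retreating back along $\gamma_{r_0}$ by a small affine parameter $\delta>0$ defines a point $s(r_0)\in I^+(p)\cap I^-(r_0)$, and the candidate lower support is
\[
\phi_{r_0}(r')\doteq d_g(p,s(r_0))+d_g(s(r_0),r').
\]
By \eqref{e8}, $\phi_{r_0}\le d_g(p,\cdot)$ wherever the right-hand summand is defined, with equality at $r_0$ since $\gamma_{r_0}$ is maximizing. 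Off the future timelike cut locus of $s(r_0)$, the second summand equals $\sqrt{-g(\exp_{s(r_0)}^{-1}(r'),\exp_{s(r_0)}^{-1}(r'))}$ and is therefore smooth in $r'$.

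To pass from pointwise lower supports to a \emph{uniform} Hessian bound, I would choose $\delta$ independent of $r_0\in\ol{\UU}$, arranging that the $s(r_0)$ all lie in a compact set $K\subset I^+(p)$ and that a single open neighborhood $\VV\supset\ol{\UU}$ is disjoint from the timelike cut locus of every $s\in K$. The existence of such $\delta$ rests on compactness of $\ol{\UU}$, the closedness of the nonspacelike cut loci (already invoked in Subsection \ref{sec:3:large}), and the standard limit-curve lemma for causal curves in globally hyperbolic regions. Once this is in place, the coordinate Hessian of each $\phi_{r_0}$ at every $r'\in\VV$ is a continuous function of $(s(r_0),r')\in K\times\ol{\VV}$, so it admits a uniform lower bound $-C\,\id$. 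With $f$ as above, each $(\phi_{r_0}+f)\circ x^{-1}$ is convex on $x(\VV)$, and hence so is
\[
(d_g(p,\cdot)+f)\circ x^{-1}=\sup_{r_0\in\ol{\UU}}(\phi_{r_0}+f)\circ x^{-1}.
\]

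The main obstacle is the uniform selection step: when $r_0$ meets the future cut locus of $p$, the maximizer $\gamma_{r_0}$ (and therefore $s(r_0)$) need not be unique nor continuous in $r_0$. I plan to circumvent this by treating ``the future timelike cut locus of $s$ misses $\VV$'' as an \emph{open} condition in $s$, covering $\ol{\UU}$ by finitely many sets on which a measurable admissible selection $r_0\mapsto s(r_0)$ exists, and taking the minimum $\delta$ across the finite cover. The ensuing Hessian estimate is then a routine continuity-and-compactness argument on the smooth dependence of $\exp$ on its basepoint, and no further issue arises from the possible multivaluedness because only the \emph{values} $\phi_{r_0}(r')$ enter the supremum.
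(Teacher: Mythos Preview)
Your approach shares the paper's central device: at each $r_0$, retreat along a maximizing timelike geodesic from $p$ to a point $s(r_0)$ (the paper's $\gamma_p(\epsilon)$) and use $\phi_{r_0}=d_g(p,s(r_0))+d_g(s(r_0),\cdot)$ as a smooth lower support touching $d_g(p,\cdot)$ at $r_0$; both arguments hinge on the fact that, since $\gamma_{r_0}$ is maximizing, $r_0$ lies strictly before the cut point of $s(r_0)$, so the support is smooth near $r_0$.

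Where the routes diverge is the uniform Hessian lower bound. The paper obtains it from the semi-Riemannian Hessian (Riccati) comparison of Andersson--Howard \cite{anderho}: the Hessian of $d_g(s(r_0),\cdot)$ at $r_0$ is two-sidedly bounded in terms of the ambient timelike sectional curvature bounds and the distance $d_g(s(r_0),r_0)=d_g(p,r_0)-\epsilon$, both uniformly controlled on the compact closure. This yields $D^2\phi_{r_0}(r_0)\ge -C\,\id$ with $C$ independent of $r_0$ \emph{and} of the (possibly discontinuous, multivalued) choice of $s(r_0)$; semi-convexity then follows from the standard ``smooth lower supports with uniformly bounded Hessian at the contact point'' characterization, for which the paper defers to \cite{angalho}. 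Your compactness argument is more elementary---no comparison geometry---but asks for strictly more: that each $\phi_{r_0}$ be smooth with Hessian bounded below on a \emph{fixed} $\VV\supset\ol\UU$, so that the supremum is manifestly convex. This forces the cut locus of \emph{every} $s(r_0)$ to miss $\VV$, and your finite-cover sketch does not close that gap: a finite cover yields supports touching only at the cover centers, not at every $r_0$, so the identity $d_g(p,\cdot)=\sup_{r_0}\phi_{r_0}$ on $\UU$ is lost; and a measurable selection, absent continuity, does not by itself confine the $s(r_0)$ to a compact set on which the cut loci uniformly avoid $\VV$. The cleanest repair is to abandon the ``smooth on all of $\VV$'' requirement and use the contact-point characterization as the paper does; the comparison estimate then supplies the needed uniform bound with no selection argument at all.
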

\begin{proof} (Sketch; for more details, see \cite{angalho} and references therein)
We will present the argument only for $d_g(p,.)$, for the argument for $d_g(.,q)$ is 
analogous. There is a future-directed maximizing timelike geodesic segment, say 
$\gamma_p:[0,d_g(p,r)]\To\M$, from $p$ to $r\in\OO_{p,q}$ 
(that is, $\gamma_p(0)=p$, $\gamma_p(d_g(p,r))=r$ and $d_g(\gamma_p(\lambda),\gamma_p
(\lambda'))=\lambda'-\lambda$ for all $\lambda'>\lambda$). Since $\gamma_p$ is maximizing, 
the segment $\gamma_p(\epsilon,d_g(p,r))$ is free of cut points for any $0<epsilon<d_g(p,r)$
(from now on fixed) and hence there is an open neighborhood $\UU_\epsilon$ of $\gamma_p
([\epsilon,d_g(p,r)])$ in $\OO_{p,q}$ where $d_g(\gamma_p(\epsilon),.)$ is smooth. We use
$\gamma_p(\epsilon)$ instead of $p$ in the first entry of $d_g$ because, whereas $\gamma_p
\restr{[\epsilon,d_g(p,r)]}$ the unique maximizing timelike geodesic segment linking 
$\gamma_p(\epsilon)$ to $r$, it is not necessarily true that $\gamma_p$ is the unique 
maximizing timelike geodesic segment linking $p$ to $r$. Nevertheless, we still have 
$d_g(p,r)=d_g(p,\gamma_p(\epsilon))+d_g(\gamma_p(\epsilon),r)$, of course, but not 
necessarily $d_g(p,\gamma_p(d_g(p,r)+\delta))=d_g(p,\gamma_p(\epsilon))+d_g(\gamma_p
(\epsilon),\gamma_p(d_g(p,r)+\delta)$, no matter how small $0<\delta$ is.

Particularly, we have $g^{-1}(\ud d_g(\gamma_p(\epsilon),.),\ud d_g(\gamma_p(\epsilon),.))=-1$ 
everywhere in $\UU_\epsilon$ and the covariant Hessian $\mbox{Hess}d_g(p,.)(X,Y)=\nabla_X
\nabla_Yd_g(\gamma_p(\epsilon),.)$ exists everywhere in $\UU_\epsilon$ and defines the second 
fundamental form (whose associated linear operator is the Weingarten map) of the level 
hypersurfaces of $d_g(\gamma_g(\epsilon),.)$ in $\UU_\epsilon$. Consider particularly an open 
normal coordinate neighborhood $\VV\subset\UU_\epsilon$ of $r$. Then $\mbox{Hess}d_g(p,.)$ 
as a quadratic form satisfies two-sided bounds in $\VV$ in terms of existing two-sided bounds 
on the sectional curvature of 2-planes containing $\dot{\gamma}_p$ \cite{anderho}.%\qed
\end{proof}

More precisely, there exist $\phi_{p,r},\phi_{r,q}\in\C^\infty(\UU)$ such that $d_g(p,r)=
\phi_{p,r}(r)$ and $d_g(r,q)=\phi_{r,q}(r)$, $d_g(p,.)\geq\phi_{p,r}$ and $d_g(.,q)\geq
\phi_{r,q}$ in $\UU$ and the Hessians $D^2\phi_{p,r}$ and $D^2\phi_{r,q}$ are such that 
$D^2\phi_{p,r}(r)-c_{p,r}\mathbb{1}$ and $D^2\phi_{r,q}(r)-c_{r,q}\mathbb{1}$ are positive 
semidefinite matrices for $c_{p,r},c_{r,q}\in\RR$, which not only implies semi-convexity 
in the sense of Proposition \ref{p2} \cite{angalho}, but also guarantees that the given 
definition is independent of coordinates. In these circumstances, we can invoke the 
classical result of Aleksandrov \cite{evangar}, which tells us that a convex function is 
not only locally Lipschitz, but is also twice differentiable almost everywhere with 
respect to Lebesgue measure. Such a result obviously extends to semi-convex functions. 
As the restriction of $\mu_g$ to normal neighborhoods is absolutely continuous with 
respect to Lebesgue measure, we thus obtain the following
\begin{proposition}\label{p3}
$d_g(p,.)^2$ and $d_g(.,q)^2$ are locally Lipschitz and twice differentiable almost 
everywhere in $\OO_{p,q}$.\qed
\end{proposition}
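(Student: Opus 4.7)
The plan is to deduce the claim directly from Proposition \ref{p2} via Aleksandrov's classical theorem, and then to transfer the resulting properties from $d_g(p,\cdot)$ and $d_g(\cdot,q)$ themselves to their squares by a routine pointwise calculation.

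First, I would fix $r\in\OO_{p,q}$ together with the chart $x:\UU\To\RR^d$ and the smooth $f\in\C^\infty(\UU)$ produced by Proposition \ref{p2}, so that $h\doteq(d_g(p,\cdot)\restr{\UU}+f)\circ x^{-1}$ is convex on the open set $x(\UU)\subset\RR^d$. Aleksandrov's theorem asserts that any convex function on an open subset of $\RR^d$ is locally Lipschitz and admits a pointwise second-order Taylor expansion at Lebesgue-almost every point. Subtracting the smooth summand $f\circ x^{-1}$, which is $\C^\infty$ and hence locally Lipschitz and twice differentiable everywhere, preserves both properties, so $d_g(p,\cdot)\circ x^{-1}$ is itself locally Lipschitz and twice differentiable Lebesgue-almost everywhere on $x(\UU)$. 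The same argument applied to $d_g(\cdot,q)$ yields the analogous statement, and because such neighborhoods $\UU$ cover $\OO_{p,q}$, both conclusions globalize.

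Next, I would pass from the distances to their squares by a pointwise argument. For any locally Lipschitz $\varphi$ on an open subset of $\RR^d$, the elementary inequality $|\varphi(x)^2-\varphi(y)^2|\leq(|\varphi(x)|+|\varphi(y)|)|\varphi(x)-\varphi(y)|$ together with the local boundedness of $\varphi$ shows that $\varphi^2$ is locally Lipschitz as well. If $\varphi$ additionally admits a pointwise second-order Taylor polynomial at some $r$, a short expansion together with the Leibniz rule produces such a polynomial for $\varphi^2$ at $r$, so $\varphi^2$ is twice differentiable at $r$. Applying this to $\varphi\in\{d_g(p,\cdot),d_g(\cdot,q)\}$ gives the desired conclusions for $d_g(p,\cdot)^2$ and $d_g(\cdot,q)^2$.

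Finally, the coordinate-dependent phrasing ``almost everywhere'' poses no difficulty: on any chart, the Riemannian volume measure $\mu_g$ has a smooth, strictly positive density with respect to Lebesgue measure, so the two measures share the same null sets, and both local Lipschitz continuity and pointwise twice differentiability are invariant under smooth changes of coordinates. I do not anticipate any real obstacle in this argument; the only point that warrants care is the extension of Aleksandrov's theorem from genuinely convex to semi-convex functions, which is already handled by the smooth correction $f$ supplied by Proposition \ref{p2}.
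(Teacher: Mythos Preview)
Your proposal is correct and follows essentially the same route as the paper: semi-convexity from Proposition \ref{p2}, Aleksandrov's theorem applied to the convex representative, and absolute continuity of $\mu_g$ with respect to Lebesgue measure. You are actually more explicit than the paper in spelling out the passage from $d_g(p,\cdot)$, $d_g(\cdot,q)$ to their squares, which the paper leaves implicit.
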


Now we want to prove a simple but important approximation property for $\lambda^g_{p,q}$,
already suggested by the proof of Proposition \ref{p2}.

\begin{proposition}\label{p4}
Let $(p_n)_{n\in\NN}$, $(q_n)_{n\in\NN}$ be two sequences of points in $\M$ such that
$p\leq p_n\ll q_n\leq q$, $p_n\gotoas{n}{\infty} p$, $q_n\gotoas{n}{\infty} q$. Then,
for any $K\subset\OO_{p,q}$ compact, we have (possibly after passing to subsequences
so as to guarantee that $K\subset\OO_{p_n,q_n}$ for all $p_n,q_n$ in the respective 
subsequences) that $\sup_K|\lambda^g_{p_n,q_n}-\lambda^g_{p,q}|\gotoas{n}{\infty} 0$.
\end{proposition}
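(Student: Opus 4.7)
The plan is to reduce everything to the uniform continuity of the Lorentzian distance function on compact subsets of the globally hyperbolic region containing $\ol{\OO_{p,q}}$, then pass the uniform convergence through the logarithm. The statement has two ingredients that must be checked: first, that $K\subset\OO_{p_n,q_n}$ for $n$ large (so that $\lambda^g_{p_n,q_n}$ is even defined on $K$), and second, the uniform convergence itself.

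First I would invoke Lemma 4.5 of \cite{beemee}: since $\ol{\OO_{p,q}}$ lies in an open globally hyperbolic region, $d_g$ is jointly continuous and finite-valued there. Applied to the compact set $(\{p\}\cup\{p_n:n\in\NN\})\times K$, this yields $\sup_{r\in K}|d_g(p_n,r)-d_g(p,r)|\to 0$, and analogously $\sup_{r\in K}|d_g(r,q_n)-d_g(r,q)|\to 0$.

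Second, since $K\subset\OO_{p,q}$ is compact, the continuous strictly positive functions $d_g(p,\cdot)$ and $d_g(\cdot,q)$ attain positive minima on $K$; call them $m_p,m_q>0$. Combined with the uniform convergence above, this forces $d_g(p_n,r)\geq m_p/2$ and $d_g(r,q_n)\geq m_q/2$ on $K$ for all $n$ large enough. In particular, this already shows that $K\subset I^+(p_n)\cap I^-(q_n)=\OO_{p_n,q_n}$ eventually, justifying the subsequence remark and making $\lambda^g_{p_n,q_n}|_K$ well defined.

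Finally, since the logarithm is uniformly continuous on any interval of the form $[c,C]$ with $c>0$, the combination of the two-sided bounds on $d_g(p_n,\cdot)$ and $d_g(\cdot,q_n)$ together with the uniform convergence of these distances on $K$ yields
\[
\sup_{r\in K}\left|\log\frac{d_g(p_n,r)}{d_g(r,q_n)}-\log\frac{d_g(p,r)}{d_g(r,q)}\right|\gotoas{n}{\infty}0,
\]
which is precisely the claim. The only conceivable obstacle is the preliminary step of showing $K\subset\OO_{p_n,q_n}$ eventually, but as noted above this drops out for free from the positive lower bounds on $d_g(p,\cdot)|_K$ and $d_g(\cdot,q)|_K$, bypassing any direct argument with openness of the chronological relation.
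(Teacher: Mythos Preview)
Your argument is correct and follows the same route as the paper, namely reducing everything to the joint continuity of $d_g$ on the globally hyperbolic region containing $\ol{\OO_{p,q}}$. The paper's proof is essentially a one-line sketch (``the assertion then follows from the continuity of $d_g$''), whereas you spell out the uniform-continuity-on-compacts step and, in addition, actually \emph{justify} the inclusion $K\subset\OO_{p_n,q_n}$ for large $n$ via the positive lower bounds on $d_g(p,\cdot)\restr{K}$ and $d_g(\cdot,q)\restr{K}$---a point the paper simply assumes away.
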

\begin{proof}
Obviously, we have $\OO_{p_n,q_n}\subset\OO_{p,q}$ for all $n$. From now on, we tacitly
assume without loss of generality that $K\subset\OO_{p_n,q_n}$ for all $n$. The assertion
then follows from the continuity of $d_g$.%\qed
\end{proof}

An important peculiarity of $\lambda^g_{p,q}$ in Minkowski spacetime which also survives 
to a great extent in the general case is that a timelike geodesic segment linking $p$ to
$q$ which maximizes the Lorentzian distance between these two points will be, up to (an
inevitable) reparametrization, a complete orbit of the diffeomorphism flow associated to
the foliation induced by $\lambda^g_{p,q}$. 

% The only difference is that such a segment 
% need no longer be unique, but since any connected subset of such a segment is free of cut
% points, it follows that $\lambda^g_{p,q}$ is \emph{smooth} in a (say, tubular) open 
% neighborhood of any such segment in $\OO_{p,q}$. Another consequence of the former 
% fact is the
\begin{proposition}\label{p5}
Any maximal, unit-speed future directed timelike geodesic segment $\gamma:[0,d_g(p,q)]$ 
between $p\ll q$ realizes the Lorentzian distance to the level sets of $\lambda^g_{p,q}$.
That is, for each $\lambda\in\RR$, if $t_\lambda\in(0,d_g(p,q))$ is such that $\gamma
((0,d_g(p,q)))\cap(\lambda^g_{p,q})^{-1}(\lambda)=\{\gamma(t_\lambda)\}$, then
\[
d_g(\gamma(t'),(\lambda^g_{p,q})^{-1}(\lambda))\doteq\sup_{r''\in\Sigma_\lambda}d_g
(\gamma(t'),r'')=d_g(\gamma(t'),\gamma(t_\lambda)),\,\forall t'\leq t_\lambda,
\]
\[
d_g((\lambda^g_{p,q})^{-1}(\lambda),\gamma(t'))\doteq\sup_{r''\in\Sigma_\lambda}d_g(r'',
\gamma(t'))=d_g(\gamma(t_\lambda),\gamma(t')),\,\forall t'\geq t_\lambda.
\]
\end{proposition}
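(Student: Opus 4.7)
The plan is to reduce everything to the reverse triangle inequality and the structural identity $\lambda^g_{p,q}(r) = \log[d_g(p,r)/d_g(r,q)]$. First I would set up the geodesic parametrization: since $\gamma$ is maximal and unit-speed, the reverse triangle inequality forces $d_g(\gamma(s),\gamma(t)) = t-s$ for all $0 \leq s \leq t \leq d_g(p,q)$, so in particular $d_g(p,\gamma(t)) = t$ and $d_g(\gamma(t),q) = d_g(p,q)-t$. Consequently
\[
\lambda^g_{p,q}(\gamma(t)) = \log\frac{t}{d_g(p,q)-t},
\]
which is strictly increasing from $-\infty$ to $+\infty$ as $t$ runs over $(0,d_g(p,q))$. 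This guarantees that $t_\lambda$ is unambiguously defined for each $\lambda \in \RR$.

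Next I would prove the crucial estimate that \emph{every} point of the level set $\Sigma_\lambda \doteq (\lambda^g_{p,q})^{-1}(\lambda)$ lies at most $t_\lambda$ units of Lorentzian distance away from $p$ (and analogously for $q$). Fix $r \in \Sigma_\lambda$ and set $A = d_g(p,r)$, $B = d_g(r,q)$. Since $r \in \OO_{p,q}$ we have $p \ll r \ll q$, so the reverse triangle inequality \eqref{e8} yields $A + B \leq d_g(p,q)$. Meanwhile, $r \in \Sigma_\lambda$ means exactly $A/B = t_\lambda/(d_g(p,q)-t_\lambda)$. Substituting $A = \frac{t_\lambda}{d_g(p,q)-t_\lambda}B$ into $A + B \leq d_g(p,q)$ gives $B \leq d_g(p,q)-t_\lambda$ and therefore $A \leq t_\lambda$, with equality in both bounds whenever $r = \gamma(t_\lambda)$.

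To finish, take $t' \leq t_\lambda$ and $r \in \Sigma_\lambda$. If $\gamma(t') \not\leq r$, then $d_g(\gamma(t'),r) = 0 \leq t_\lambda - t'$ and there is nothing to prove; otherwise, applying \eqref{e8} along the causal chain $p \ll \gamma(t') \leq r$ gives
\[
t_\lambda \geq d_g(p,r) \geq d_g(p,\gamma(t')) + d_g(\gamma(t'),r) = t' + d_g(\gamma(t'),r),
\]
so $d_g(\gamma(t'),r) \leq t_\lambda - t' = d_g(\gamma(t'),\gamma(t_\lambda))$. Since equality is realized at $r = \gamma(t_\lambda) \in \Sigma_\lambda$, the supremum is attained and equals $t_\lambda - t'$. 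The statement for $t' \geq t_\lambda$ is proved symmetrically, using the bound $d_g(r,q) \leq d_g(p,q)-t_\lambda$ together with the reverse triangle inequality along $r \leq \gamma(t') \ll q$.

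There is no real obstacle here; the proof is essentially algebraic manipulation of two applications of \eqref{e8}. The only mild subtlety is bookkeeping the direction of the inequalities and recognizing that points of $\Sigma_\lambda$ not causally related to $\gamma(t')$ contribute trivially to the supremum, so that the relevant part of the argument only concerns points with a genuine causal relation to $\gamma(t')$, where \eqref{e8} is available.
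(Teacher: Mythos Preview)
Your proof is correct and rests on the same two ingredients as the paper's --- the reverse triangle inequality \eqref{e8} together with the level-set constraint $d_g(p,r)/d_g(r,q)=e^\lambda$ --- but the organization differs in a way worth noting. The paper first fixes a point $r\in\Sigma_\lambda$ that \emph{realizes} the supremum $d_g(\Sigma_\lambda,\gamma(t'))$ (appealing to compactness of the portion of $\Sigma_\lambda$ in the causal past of $\gamma(t')$ to secure its existence), and then combines two reverse triangle inequalities, along $p\ll r\leq\gamma(t')$ and $r\leq\gamma(t')\ll q$, with the maximality of $\gamma$ to obtain $d_g(\gamma(t_\lambda),\gamma(t'))\geq d_g(r,\gamma(t'))$. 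Your route instead first isolates the standalone bound $d_g(p,r)\leq t_\lambda$ (equivalently $d_g(r,q)\leq d_g(p,q)-t_\lambda$) for \emph{every} $r\in\Sigma_\lambda$, from a single reverse triangle inequality on $p\ll r\ll q$, and then applies one further triangle inequality involving $\gamma(t')$. This buys you two things: you avoid having to produce a maximizer on $\Sigma_\lambda$ altogether, and you extract a clean intermediate lemma --- $\Sigma_\lambda$ lies entirely within Lorentzian distance $t_\lambda$ of $p$ and $d_g(p,q)-t_\lambda$ of $q$, with equality exactly on the maximal geodesic --- which is of independent interest. Your explicit handling of the case $\gamma(t')\not\leq r$ is also a point the paper leaves implicit.
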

\begin{proof}
We will just prove the claim for $t'\geq t_\lambda$, the other case being analogous. Let 
$r\in\Sigma_\lambda$ realize the Lorentzian distance between $\Sigma_\lambda\doteq
(\lambda^g_{p,q})^{-1}(\lambda)$ and $r'\doteq\gamma(t')$, i.e. $d_g(r,r')=d_g
(\Sigma_\lambda,r')$, and $r_\lambda=\gamma(t_\lambda)$. The existence of $r$ follows 
from the continuity of $d_g$ within $\OO_{p,q}$ and the fact that all past directed, 
past inextendible causal curves in $\M$ issuing from $r'$ must cross $\Sigma_\lambda$ 
in a compact subset thereof before leaving $\OO_{p,q}$ since the former is a Cauchy 
hypersurface w.r.t. the latter. We have the following six facts: from (a) the 
maximality of $\gamma$, it follows that 
\begin{itemize}
\item[] (a1)\quad $d_g(p,r_\lambda)+d_g(r_\lambda,r')=d_g(p,r')$;
\item[] (a2)\quad $d_g(r_\lambda,r')+d_g(r',q)=d_g(r',q)$;
\end{itemize}
from (b) the definition of $\Sigma_\lambda$,
\begin{itemize}
\item[] (b1)\quad $d_g(p,r)=e^\lambda d_g(r,q)$;
\item[] (b2)\quad $d_g(p,r_\lambda)=e^\lambda d_g(r_\lambda,q)$;
\end{itemize}
finally, from (c) the reverse triangular inequality for $d_g$,
\begin{itemize}
\item[] (c1)\quad $d_g(p,r)+d_g(r,r')\leq d_g(p,r')$;
\item[] (c2)\quad $d_g(r,r')+d_g(r',q)\leq d_g(r,q)$.
\end{itemize}
We have then the following implications:
\begin{itemize}
\item[] (b1)+(c1)+(c2) $\Then$ (d1)\quad $d_g(p,r')\geq e^\lambda d_g(r',q)+(1+e^\lambda)d_g(r,r')$;
\item[] (b2)+(a1)+(a2) $\Then$ (d2)\quad $d_g(p,r')=e^\lambda d_g(r',q)+(1+e^\lambda)d_g(r_\lambda,r')$;
\item[] (d1)+(d2) $\Then$ (d3)\quad $d_g(r_\lambda,r')\geq d_g(r,r')$.
\end{itemize}
However, from the definition of $r$ we must have $d_g(r_\lambda,r')\leq d_g(r,r')$, hence
the claim follows.%\qed
\end{proof}

\begin{corollary}\label{c2}
Any maximal, unit-speed future directed timelike geodesic segment $\gamma:[0,d_g(p,q)]$ 
between $p\ll q$ is an orbit of $\lambda^g_{p,q}$, up to reparametrization.
\end{corollary}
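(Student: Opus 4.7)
The plan is to exploit Proposition~\ref{p5} to show that $\dot{\gamma}(t)$ is parallel to $\nabla\lambda^g_{p,q}$ at every interior point of $\gamma$, which is exactly what it means for $\gamma$ to be an integral curve---hence, up to reparametrization, an orbit---of the (suitably normalized) gradient flow of $\lambda^g_{p,q}$. Since $\lambda^g_{p,q}\circ\gamma$ is the smooth bijection $t\mapsto\log[t/(d_g(p,q)-t)]$ of $(0,d_g(p,q))$ onto $\RR$ by unit-speed maximality, each level set $\Sigma_\lambda$ meets $\gamma$ at a unique interior point $\gamma(t_\lambda)$, and it suffices to verify the claimed parallelism at every such $\gamma(t_\lambda)$.

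First I would check that $d_g(p,\cdot)$ and $d_g(\cdot,q)$ are smooth in a neighborhood of each interior point $\gamma(t_\lambda)$. For $d_g(p,\cdot)$, note that $\gamma\restr{[0,t_\lambda]}$ encounters no conjugate point of $p$, since geodesics cannot maximize past a conjugate point, and that $\gamma(t_\lambda)$ is not a cut point of $p$: any second maximizing causal geodesic from $p$ to $\gamma(t_\lambda)$ would concatenate with $\gamma\restr{[t_\lambda,d_g(p,q)]}$ into a broken maximal causal curve from $p$ to $q$, violating the classical fact \cite{beemee} that maximal causal curves are unbroken smooth geodesics up to reparametrization. A symmetric argument handles $d_g(\cdot,q)$. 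Consequently $\lambda^g_{p,q}$ is smooth on an open neighborhood of $\gamma(t_\lambda)$ and $\Sigma_\lambda$ is smoothly embedded there.

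With this regularity secured, Proposition~\ref{p5} (and its time-reversed counterpart) asserts that $\gamma\restr{[t_\lambda,t']}$ maximizes Lorentzian length between $\Sigma_\lambda$ and $\gamma(t')$ for $t'$ close to $t_\lambda$. Applying the standard first-variation formula to smooth timelike variations with fixed endpoint at $\gamma(t')$ and variable initial endpoint constrained to $\Sigma_\lambda$ gives $g(\dot{\gamma}(t_\lambda),V)=0$ for every $V\in T_{\gamma(t_\lambda)}\Sigma_\lambda$, i.e.\ $\dot{\gamma}(t_\lambda)\perp T_{\gamma(t_\lambda)}\Sigma_\lambda$. Since $\nabla\lambda^g_{p,q}(\gamma(t_\lambda))$ is, by definition of gradient, also $g$-orthogonal to $T_{\gamma(t_\lambda)}\Sigma_\lambda$, and both vectors are timelike---hence span the one-dimensional orthogonal complement of the spacelike hyperplane $T_{\gamma(t_\lambda)}\Sigma_\lambda$---they must be parallel. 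As this holds for every $\lambda\in\RR$, $\gamma$ is an integral curve of a positive scalar rescaling of $\nabla\lambda^g_{p,q}$, hence an orbit of the associated flow up to reparametrization. The main obstacle is the regularity step; once this is handled by the cut-locus argument, the rest is classical. Alternatively one could bypass the variational formula altogether by computing $\nabla d_g(p,\cdot)$ and $\nabla d_g(\cdot,q)$ along $\gamma$ directly via the eikonal identity $g^{-1}(\ud d_g,\ud d_g)=-1$ already invoked in Proposition~\ref{p2}, and applying the chain rule to $\lambda^g_{p,q}=\log d_g(p,\cdot)-\log d_g(\cdot,q)$ to read off the proportionality with $\dot{\gamma}$ explicitly.
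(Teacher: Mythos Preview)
Your proof is correct and follows essentially the same route as the paper's: both arguments first establish smoothness of $d_g(p,\cdot)$ and $d_g(\cdot,q)$ along the interior of $\gamma$ via a cut-locus argument, and then invoke Proposition~\ref{p5} to conclude that $\dot{\gamma}$ is parallel to $\nabla\lambda^g_{p,q}$; the reparametrization $t\mapsto\log[t/(d_g(p,q)-t)]$ also matches. Your version is somewhat more explicit---you spell out the broken-geodesic contradiction for the cut-locus step and use the first-variation formula to extract the orthogonality $\dot{\gamma}(t_\lambda)\perp T_{\gamma(t_\lambda)}\Sigma_\lambda$, whereas the paper phrases the latter as ``the direction of greatest variation of the ratio''---but these are the same argument in substance.
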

\begin{proof}
Notice that, due to the maximality of $\gamma$, $\gamma(t)$ doesn't belong to the timelike 
cut locus of neither $p$ nor $q$, hence $d_g(p,\cdot)$ and $d_g(\cdot,q)$ are smooth in 
an open neighborhood of $\gamma(t)$ for any $t\in(0,d_g(p,q))$, and thus so are the level 
sets of $\lambda^g_{p,q}$. Moreover, maximality of $\gamma$ together with Proposition \ref{p5} 
also imply that the direction of the tangent vector $\dot{\gamma}(t)$ coincides with the 
direction of greatest variation of the ratio $\frac{d_g(p,\cdot)}{d_g(\cdot,q)}$. Since $\log$ 
is strictly increasing, this shows that the covector $g(\dot{\gamma}(t),\cdot)(\gamma(t))$ 
points in the same direction as $\ud\lambda^g_{p,q}(\gamma(t))$, thus establishing the first 
part of the claim. Since $d_g(\gamma(t),q)=d_g(p,q)-d_g(p,\gamma(t))$ and $t=d_g(p,\gamma(t))$ 
by maximality and the unit parametrization of $\gamma$, the desired reparametrization of 
$\gamma$ is given by 
\[
t\mapsto\lambda^g_{p,q}(\gamma(t))=\log\left(\frac{t}{d_g(p,q)-t}\right).
\]%\qed
\end{proof}

Proposition \ref{p5} together with formula (d2) in its proof gives two interesting 
alternative expressions for $\lambda^g_{p,q}$:

\begin{equation}\label{e9}
\lambda^g_{p,q}(r)=\log\left(\frac{d_g(p,r)}{d_g(r,q)}\right)=\left\{\begin{array}{lr}-\log
\left(e^\lambda+(1+e^\lambda)\frac{d_g(r,(\lambda^g_{p,q})^{-1}(\lambda))}{d_g(p,r)}\right) & 
(\lambda^g_{p,q}(r)\leq\lambda)\\ \log\left(e^\lambda+(1+e^\lambda)\frac{d_g((\lambda^g_{p,q})^{-1}
(\lambda),r)}{d_g(r,q)}\right) & (\lambda^g_{p,q}(r)\geq\lambda)\end{array}\right.
\end{equation}

We will use this formula in Section \ref{sec:4} to give a generalization of 
$\lambda^g_{p,q}$ which is defined when $p$ (resp. $q$) is not given, but instead
an acausal hypersurface to the past of $q$ (resp. future of $p$) is. % This generalization
% has some important potential uses in the geometric and analytical study of wave equations.

%%%%

\subsection{Finer details in the small}\label{sec:3:small}

If $\OO_{p,q}$ is suitably small, then much more information about $\lambda^q_{p,q}$ can be 
obtained. To this avail, recall now that, for $(\M,g)$ \emph{strongly causal} and $p\ll q$ 
belonging to \emph{geodesically convex} regions, $\frac{1}{2}d_{\bar{g}}^2$ coincides within 
$\{(r,s):p\ll r\leq s\ll q\}$ with \emph{Synge's world-function}, given by 
\[
\Gamma_g(p,q)\doteq-\frac{1}{2}\int^1_0\bar{g}(\dot{\gamma}_{p,q}(s),\dot{\gamma}_{p,q}(s))ds,
\] 
with $\gamma_{p,q}$ the (only) geodesic segment from $p=\gamma_{p,q}(0)$ to $q=\gamma_{p,q}(1)$ 
\cite{beemee}. We recall from \cite{fried} the following properties of $\Gamma_g(p,q)$:

\begin{itemize}
\item $\Gamma_g\in\mathscr{C}^\infty\left(\bigcup_{p\in\mathscr{M}}\{p\}\times\mathscr{U}_p\right)$, 
where $\mathscr{U}_p$ is an open, geodesically convex neighborhood of $p$;
\item $\Gamma_g(p,q)=\Gamma_g(q,p)$;
\item $\nabla^a\Gamma_g(p,\cdot)=-\dot{\gamma}^a_{p,.}(.)$ and $\nabla^a\Gamma_g(\cdot,q)=
-\dot{\gamma}^a_{.,q}(\cdot)$, where $\cdot$ denotes the variable on which $\nabla$ acts. We 
immediately have the fundamental Gauss's Lemma
\begin{equation}\label{e10}
g^{-1}(\ud_p\Gamma_g(p,q),\ud_p\Gamma_g(p,q))=g^{-1}(\ud_q\Gamma_g(p,q),\ud_q\Gamma_g(p,q))=-2\Gamma_g(p,q),
\end{equation}
where $\ud_p$ and $\ud_q$ denote respectively the differential with respect to the first and second 
variables.
\item $(\nabla_a\nabla_b\Gamma_g(p,.))(p)=-g_{ab}(p)$.
\end{itemize}

Recalling that any causally simple spacetime is strongly causal, we write $\lambda^{g}_{p,q}(r)
\doteq\frac{1}{2}(\log(\Gamma_g(p,r))-\log(\Gamma_g(r,q)))$, whence it follows that 
\begin{equation}\label{e11}
T^a=\frac{\nabla^a\lambda^g_{p,q}}{g^{-1}(d\lambda^g_{p,q},d\lambda^g_{p,q})}=\frac{\Gamma_g(p,r)
\nabla^a\Gamma_g(r,q)-\Gamma_g(r,q)\nabla^a\Gamma_g(p,r)}{\Gamma_g(p,r)+\Gamma_g(r,q)+g^{-1}(d
\Gamma_g(p,r),d\Gamma_g(r,q))}
\end{equation}
generates the flux of diffeomorphisms $\lambda\mapsto u^\lambda_{p,q}$ associated to 
the foliation induced by $\lambda^g_{p,q}$, that is, $T$ is the unique vector field 
satisfying the following properties:

\begin{itemize}
\item $\ud\lambda^g_{p,q}(r)(T)=1$ for all $r\in\OO_{p,q}$;
\item $g(T,X)=0$ for all $X^a$ tangent to $(\lambda^g_{p,q})^{-1}(t)$, $t\in\RR$.
\end{itemize}

To obtain formula \eqref{e11}, notice that the above properties imply that
\begin{equation}\label{e12}
g(T,T)=\frac{1}{g^{-1}(\ud\lambda^g_{p,q},\ud\lambda^g_{p,q})},
\end{equation}
where
\begin{equation}\label{e13}
g^{-1}(\ud\lambda^g_{p,q},\ud\lambda^g_{p,q})=-\frac{1}{2}\left(\frac{1}{\Gamma_g(p,\cdot)}+\frac{1}{
\Gamma_g(\cdot,q)}+\frac{g^{-1}(\ud\Gamma_g(p,\cdot),\ud\Gamma_g(\cdot,q))}{\Gamma_g(p,\cdot)\Gamma_g
(\cdot,q)}\right),
\end{equation}
of which \eqref{e11} is an immediate consequence. Let us now pay attention to the limiting form 
of $T$ in the future (resp. past) horizon $\partial_+\OO_{p,q}$ (resp. $\partial_-\OO_{p,q}$). 
We obtain these limits by sending $\Gamma_g(r,q)\To 0$ (resp. $\Gamma_g(p,r)\To 0$) in \eqref{e9} 
while keeping $\Gamma_g(p,r)$ (resp. $\Gamma_g(r,q)$) constant (notation: $\lim_{\To\partial_\pm\OO_{p,q}}$), 
yielding
\begin{equation}\label{e14}
\begin{split}
\lim_{\To\partial_+\OO_{p,q}}T^a &=-\frac{\Gamma_g(\cdot,q)\nabla^a\Gamma_g(p,\cdot)}{\Gamma_g(\cdot,q)+g^{-1}
(\ud\Gamma_g(p,\cdot),\ud\Gamma_g(\cdot,q))}\\
\lim_{\To\partial_-\OO_{p,q}}T^a &=\frac{\Gamma_g(p,\cdot)
\nabla^a\Gamma_g(\cdot,q)}{\Gamma_g(p,\cdot)+g^{-1}(\ud\Gamma_g(p,\cdot),\ud\Gamma_g(\cdot,q))}
\end{split}
\end{equation}
That is, $T$ extends continuously to a \emph{null} vector field over $\partial\OO_{p,q}$ which is tangent 
and normal to $\partial_{+}\OO_{p,q}$ and $\partial_-\OO_{p,q}$, and vanishing at $p$, $q$ and $\E_{p,q}$. 
Particularly, $T$ is tangent to the null generators of $\partial_{+}\OO_{p,q}$ and $\partial_-\OO_{p,q}$,
which entails that $\nabla_TT=\kappa^\pm_{p,q}T$, where $\kappa^+_{p,q}$ (resp. $\kappa^-_{p,q}$) is a 
scalar function on $\partial_{+}\OO_{p,q}$ (resp. $\partial_-\OO_{p,q}$) which measures the failure of 
any extension of $\lambda^g_{p,q}$ to $\partial_{+}\OO_{p,q}$ (resp. $\partial_-\OO_{p,q}$) in being 
an affine parameter for the latter's null generators.\footnote{We remark, however, that $T$ determines
this extension up to a smooth choice of an additive constant on each null generator, in the same
way as the extension of isometries in Minkowski spacetime to null infinity in the sense of Penrose
suffers from the so-called supertranslation ambiguity \cite{wald}.} Another way of seeing 
$\kappa^\pm_{p,q}$ is as the magnitude (up to a sign) of the ``near-horizon'' acceleration 
$g(T,T)^{-1}\nabla_T T$ of the orbits of $T$ multiplied by the redshift factor $(-g(T,T))^{\frac{1}{2}}$. 
As such, it is fair to call $\kappa^+_{p,q}$ (resp. $\kappa^-_{p,q}$) the \emph{future} (resp. \emph{past})
\emph{surface gravity} of $\OO_{p,q}$, in the spirit of the zeroth law of black hole dynamics \cite{wald}.

To compute the tangential acceleration of $T$ and, therefore, $\kappa^\pm_{p,q}$, we define the auxiliary functions
\begin{align*}
h_{p,q}(r) &\doteq\Gamma_g(p,r)+\Gamma_g(r,q)+g^{-1}(\ud\Gamma_g(p,r),\ud\Gamma_g(r,q)),\\
f_p(r) &\doteq\frac{\Gamma_g(p,r)}{h_{p,q}(r)},\\
f_q(r) &\doteq\frac{\Gamma_g(r,q)}{h_{p,q}(r)},
\end{align*}
which allows us to write
\begin{equation}\label{e15}
\begin{split}
T^a &=\frac{1}{2}(f_q\nabla^a\Gamma_g(p,r)-f_p\nabla^a\Gamma_g(r,q)),\\
g(T,T) &= -2\frac{\Gamma_g(p,\cdot)\Gamma_g(\cdot,q)}{h_{p,q}}\\
 &=-2f_p\Gamma_g(\cdot,q)\\
 &=-2f_q\Gamma_g(p,\cdot)\\
 &=-2h_{p,q}f_pf_q.
\end{split}
\end{equation}
This gives us
\begin{equation}\label{e16}
\begin{split}
\nabla_{a}T_b =&\frac{1}{2}\left(f_q\nabla_{a}\nabla_b\Gamma_g(p,\cdot)-f_p\nabla_{a}\nabla_b\Gamma_g(\cdot,q)\right)\\
 &+\frac{1}{2}\left(\nabla_b\Gamma_g(p,\cdot)\nabla_{a}f_q-\nabla_b\Gamma_g(\cdot,q)\nabla_{a}f_p\right)\\
 =&\frac{1}{2}\left(f_q\nabla_{a}\nabla_b\Gamma_g(p,\cdot)-f_p\nabla_{a}\nabla_b\Gamma_g(\cdot,q)\right)\\
 &+\frac{1}{2h_{p,q}}\left(\nabla_{a}\Gamma_g(\cdot,q)\nabla_b\Gamma_g(p,\cdot)-\nabla_b\Gamma_g(\cdot,q)\nabla_{a}\Gamma_g(p,\cdot)\right)\\
 &-\frac{1}{h_{p,q}}T_b\nabla_{a}h_{p,q}.\\
\end{split}
\end{equation}
The first term is manifestly symmetric, whereas the second is manifestly antisymmetric.
Hence
\begin{equation}\label{e17}
\begin{split}
T^{a}\nabla_{a}T_b =&\frac{1}{4}\left(f_q\nabla^a\Gamma_g(p,r)-f_p\nabla^a\Gamma_g(r,q)\right)
\left(f_q\nabla_{a}\nabla_b\Gamma_g(p,\cdot)-f_p\nabla_{a}\nabla_b\Gamma_g(\cdot,q)\right)\\
 &+\frac{1}{4h_{p,q}}\left(f_q\nabla^a\Gamma_g(p,r)-f_p\nabla^a\Gamma_g(r,q)\right)
\left(\nabla_{a}\Gamma_g(\cdot,q)\nabla_b\Gamma_g(p,\cdot)\right.\\
 &\left.-\nabla_b\Gamma_g(\cdot,q)\nabla_{a}\Gamma_g(p,\cdot)\right)
+\frac{1}{h_{p,q}}T^bT^{a}\nabla_{a}h_{p,q}\\
 =& -\frac{1}{2}\left(f^2_q\nabla_b\Gamma_g(p,\cdot)+f^2_p\nabla_b\Gamma_g(\cdot,q)
+f_pf_q\nabla_bg^{-1}(\ud\Gamma_g(p,\cdot),\ud\Gamma_g(\cdot,q))\right)\\
 &+\frac{1}{4h_{p,q}}\left[\left(f_q\nabla_b\Gamma_g(p,\cdot)+f_p\nabla_b\Gamma_g(\cdot,q)\right)
g^{-1}(\ud\Gamma_g(p,\cdot),\ud\Gamma_g(\cdot,q))\right.\\
 &\left.+2\left(f_q\Gamma_g(p,\cdot)\nabla_b\Gamma_g(\cdot,q)
-f_p\Gamma_g(\cdot,q)\nabla_b\Gamma_g(p,\cdot)\right)\right]+\frac{1}{h_{p,q}}T^bT^{a}\nabla_{a}h_{p,q},
\end{split}
\end{equation}
where we have used formula \eqref{e10} in the last passage, and, recalling that the second term
is antisymmetric,
\begin{equation}\label{e18}
\begin{split}
T^{a}T^b\nabla_{a}T_b =&\frac{1}{2}T^{a}\nabla_{a}g(T,T)\\
 =&-\frac{1}{4}\left(f_q\nabla^b\Gamma_g(p,r)-f_p\nabla^b\Gamma_g(r,q)\right)
\left(f^2_q\nabla_b\Gamma_g(p,\cdot)+f^2_p\nabla_b\Gamma_g(\cdot,q)\right.\\ 
 &\left.+f_pf_q\nabla_bg^{-1}(\ud\Gamma_g(p,\cdot),\ud\Gamma_g(\cdot,q))\right)
+\frac{g(T,T)}{h_{p,q}}T^{a}\nabla_{a}h_{p,q}\\
 =&-\frac{g(T,T)}{4}\left(f^2_q+f^2_p-(f_p-f_q)\frac{1}{2h_{p,q}}
g^{-1}(\ud\Gamma_g(p,\cdot),\ud\Gamma_g(\cdot,q))\right.\\
 &\left.-\frac{1}{2h_{p,q}}T^b\nabla_bg^{-1}(\ud\Gamma_g(p,\cdot),\ud\Gamma_g(\cdot,q))\right)
+\frac{g(T,T)}{h_{p,q}}T^{a}\nabla_{a}h_{p,q}.
\end{split}
\end{equation}

To compute 
\[
\kappa^\pm_{p,q}=\lim_{\To\partial_\pm\OO_{p,q}}\frac{T^{a}T^b\nabla_{a}T_b}{g(T,T)},
\]
we need to know the limiting form of $h_{p,q}(r)$, $f_p(r)$ and $f_q(r)$ as $r$ approaches
$\partial\OO_{p,q}$. In fact, we have that
\[
h_{p,q}\left\{\begin{array}{rl}\stackrel{\To\partial_+\OO_{p,q}}{\longrightarrow} & \Gamma_g(p,\cdot)+g^{-1}(\ud\Gamma_g(p,\cdot),\ud\Gamma_g(\cdot,q))\doteq h^+_{p,q}\\
\stackrel{\To\partial_-\OO_{p,q}}{\longrightarrow} & \Gamma_g(\cdot,q)+g^{-1}(\ud\Gamma_g(p,\cdot),\ud\Gamma_g(\cdot,q))\doteq h^-_{p,q}\end{array}\right..
\]
Therefore 
\[
\begin{split}
f_p &\left\{\begin{array}{rl}\stackrel{\To\partial_+\OO_{p,q}}{\longrightarrow} & \frac{\Gamma_g(p,\cdot)}{h^+_{p,q}}\doteq f^+_p\\
\stackrel{\To\partial_-\OO_{p,q}}{\longrightarrow} & 0\end{array}\right.,\\
f_q &\left\{\begin{array}{rl}\stackrel{\To\partial_+\OO_{p,q}}{\longrightarrow} & 0\\
\stackrel{\To\partial_-\OO_{p,q}}{\longrightarrow} & \frac{\Gamma_g(\cdot,q)}{h^-_{p,q}}\doteq f^-_q\end{array}\right.,
\end{split}
\]
whence we conclude that
\[
\begin{split}
\kappa^+_{p,q} =&-\frac{1}{4}\left[\frac{f^+_p}{h^+_{p,q}}\left(\Gamma_g(p,\cdot)-\frac{1}{2}g^{-1}(\ud\Gamma_g(p,\cdot),\ud\Gamma_g(\cdot,q))\right)\right.\\
 &\left.-\frac{1}{2h^+_{p,q}}T^b\nabla_bg^{-1}(\ud\Gamma_g(p,\cdot),\ud\Gamma_g(\cdot,q))\right]+\frac{1}{h^+_{p,q}}T^{a}\nabla_{a}h^+_{p,q},\\
\kappa^+_{p,q} =&-\frac{1}{4}\left[\frac{f^-_q}{h^-_{p,q}}\left(\Gamma_g(\cdot,q)+\frac{1}{2}g^{-1}(\ud\Gamma_g(p,\cdot),\ud\Gamma_g(\cdot,q))\right)\right.\\
 &\left.-\frac{1}{2h^-_{p,q}}T^b\nabla_bg^{-1}(\ud\Gamma_g(p,\cdot),\ud\Gamma_g(\cdot,q))\right]+\frac{1}{h^-_{p,q}}T^{a}\nabla_{a}h^-_{p,q}.
\end{split}
\]

% At this point, it is helpful to have a feeling about how the general picture
% we have developed compares with the model case of Minkowski spacetime, discussed
% in Subsection \ref{sec:2:mink}. If $(\M,g)=\RR^{1,d-1}$, the law of cosines for % reference!
% timelike geodesic segments tells us that $h_{p,q}(r)=\Gamma_g(p,q)$. Since $h_{p,q}$
% extends continuously to $\ol{\OO_{p,q}}$, we have that

\begin{remark}\label{r2}
There is a hypothesis about the normalization of $T^a$ implicit in the
definition of $\kappa^\pm_{p,q}$. More precisely, our definition is conditioned 
to the following fact: if $r$ is the middle point of the maximal timelike geodesic 
in $\M$ linking $p$ to $q$ (and hence $\Gamma_g(p,r)=\Gamma_g(r,q)$), clearly 
implying that $\nabla^a\Gamma_g(p,r)=-\nabla^a\Gamma_g(r,q)$, then $g(T,T)(r)=
-\frac{d_g(p,q)^2}{16}$. If we rescale $T^a$ by a factor $R\in\RR$, $\kappa^\pm_{p,q}$ 
is rescaled by the same factor, by definition. In Minkowski spacetime, for instance, 
this has the consequence that, if we take $p=p_0$ and $q=q_0$ as in Subsection 
\ref{sec:2:mink} and rescale the $T^a$ associated to the diamond $\OO_{Rp_0,Rq_0}$ 
by a factor $\frac{2}{R}$, so as to maintain $g(T,T)(r)=\eta(T,T)(0)=-1$, it follows 
that $\kappa^\pm_{p_0,q_0}\gotoas{0}{R}{+\infty}$, in a way consistent with the fact that 
$\frac{2}{R}T^a\gotoas{(\partial_0)^a}{R}{+\infty}$ (physically, the ``Unruh temperature'' 
associated to time translations in Minkowski spacetime is zero).
\end{remark}

\section{A variation over the theme: half-diamonds}\label{sec:4}

In this section, we will apply our strategy to a slightly different kind of region. Consider
a causally simple spacetime $(\M,g)$ endowed with a smooth global time function $\tau$, and
denote the latter's level hypersurfaces by $\Sigma_t\doteq\tau^{-1}(t)$. 

\begin{definition}\label{d2}
Let $t\in\RR$, $p\in I^-\Sigma_t$, $q\in I^+\Sigma_t$ such that $J^+(p)\cap J^-(\Sigma_t)$ 
and $J^-(q)\cap J^+(\Sigma_t)$ are compact. The (relatively compact) regions of the form 
$\OO^-_p(t)\doteq I^+(p)\cap I^-(\Sigma_t)$ and $\OO^+_q(t)\doteq I^-(q)\cap I^+(\Sigma_t)$ 
are respectively called \emph{past} and \emph{future half-diamonds} at time $t$.
\end{definition}

Half-diamonds also enjoy the property of being globally hyperbolic, but, similarly to the
case of diamonds, generally are \emph{not} of the form $int D^-(J^+(p)\cap\Sigma_t)$ or
$int D^+(J^-(q)\cap\Sigma_t)$ with respect to the ambient spacetime, unless they are 
contained in a geodesically convex neighborhood.

The alternative formulae for $\lambda^g_{p,q}$ derived in Section 3 suggest the following
intrinsic global time functions for $\OO^-_p(t)$ and $\OO^+_q(t)$:
\begin{align}
\OO^-_p(t)\ni r & \mapsto \lambda^{g,t}_{-,p}(r)\doteq-\log\left(e^t+(1+e^t)\frac{d_g
(r,\Sigma_t)}{d_g(p,r)}\right)\in(-\infty,t];\label{e19}\\
\OO^+_q(t)\ni r & \mapsto \lambda^{g,t}_{+,q}(r)\doteq\log\left(e^t+(1+e^t)\frac{d_g
(\Sigma_t,r)}{d_g(r,q)}\right)\in[t,+\infty).\label{e20}
\end{align}

Notice that the definition of $\lambda^{g,t}_{-,p}$ (resp. $\lambda^{g,t}_{+,q}$) doesn't use any 
information whatsoever about $\Sigma_t\setminus J^+(p)$ (resp. $\Sigma_t\setminus J^-(q)$) -- 
particularly, if $p\ll q$ and $\Sigma_t\cap I^+(p)$ (resp. $\Sigma_t\cap I^-(q)$) equals 
$\lambda^g_{p,q}{}^{-1}(0)$, then $\lambda^{g,t}_{-,p}=\lambda^g_{p,q}$ (resp. $\lambda^{g,t}_{+,q}=
\lambda^g_{p,q}$) in $\OO^-_p(t)$ (resp. $\OO^+_q(t)$). 

Employing the same methods used in our study of $\lambda^g_{p,q}$, one concludes that
$\lambda^{g,t}_{-,p}$ and $\lambda^{g,t}_{+,q}$ are locally Lipschitz and twice differentiable 
almost everywhere, and foliate respectively $\OO^-_p(t)$ and $\OO^+_q(t)$ by acausal Cauchy 
surfaces. Moreover, for $\OO^-_p(t)$, $\OO^+_q(t)$ contained in a geodesic neighborhood, one 
can derive analogous formulae for the vector fields generating the diffeomorphism flows 
associated to these foliations -- particularly, though it is not immediately obvious from the 
formulae above, the vector fields generating the flows do admit smooth extensions respectively 
to the past horizon $\partial^-\OO^-_p(t)\doteq\partial I^+(p)\cap J^-(\Sigma_t)$ of $\OO^-_p(t)$
and the future horizon $\partial^+\OO^+_q(t)\doteq\partial I^-(q)\cap J^+(\Sigma_t)$ of 
$\OO^+_q(t)$, being there once more tangent to the null geodesic generators. Moreover, 
the asymptotic surface gravities near $p$ and $q$ are the same as in the case of diamonds.

% insert discussion on future Fermat potentials and optical functions

% \subsection{\label{s4-half-energy}Geometric local energy estimates}
% 
% The reason to study half-diamonds as well is that in this case one can derive rather useful
% local energy estimates for the wave and Klein-Gordon operators which reveal a mild dispersive
% behavior of solutions which is on the other hand quite robust with respect to the spacetime
% geometries allowed. Moreover, these estimates are completely intrinsic in the sense that they
% refer only to the local geometry.

\section{Conclusions and remarks (or: a quantum coda)}\label{sec:5}

We have presented a rather general procedure of building global time functions for
diamonds in general spacetimes. Our assumptions seem to be optimal. The method is 
sufficiently robust to be adapted so as to satisfy boundary conditions at conformal 
infinity, as for instance in the case of wedges in spacetimes endowed with past and 
future null infinity, in the spirit of \cite{ribeiro1,ribeiro2} (to be addressed in a 
future publication).

One must mention some potential applications for \emph{quantum field theory in curved 
spacetimes}, which to a great extent inspired the developments presented above: it is 
known \cite{hislongo,bglongo} that the vacuum state $\omega_0$ of a conformally 
invariant quantum field theory in Minkowski spacetime is a \emph{KMS} (i.e., 
\emph{finite-temperature}) \emph{state} w.r.t. the W*-dynamical system $(\A(\OO)_{p,q}),
\alpha_\lambda)$, where $\A(\OO_{p,q})$ is the local von Neumann algebra of observables 
associated to $\OO_{p,q}$ and $\alpha_\lambda$ is the group of *-automorphisms induced 
by $u^\lambda_{p,q}$, and hence $-2\pi\lambda$ is precisely the flow parameter of the 
\emph{Tomita-Takesaki modular group} intrinsically associated to the standard pair 
$(\A(\OO_{p,q}),\omega_0)$. This result led Martinetti and Rovelli \cite{martrov} to 
propose that $\lambda$ should then be realized as a ``thermal time'' for finite 
lifetime observers in curved spacetime. Heuristically, in the general situation we have
discussed, an asymptotic form of the KMS periodicity condition as $\lambda\To\pm\infty$
(i.e. a kind of ``return to equilibrium'') is strongly suggested by leading term of 
the asymptotic short-distance expansion of the two-point function in powers of Synge's 
world function \cite{riesz,fried,bargp}. An honest proof of this fact, however, requires 
obtaining certain decay estimates for solutions of the wave and Klein-Gordon equations 
in diamonds at large ``cosmological times'' $\lambda^g_{p,q}\To\pm\infty$. Work on such
estimates is in progress. Having \emph{asymptotic freedom} in mind, one is hence led 
to the possibility that the ``thermal time hypothesis'' is only realized 
\emph{asymptotically} as a \emph{return to equilibrium} for more realistic QFT's, since 
this becomes essentially (at least, on a geometrical level) a scaling limit.

%
%%%

\section*{Acknowledgments}

A large part of the material presented here is taken from Chapters 1 and 2 of my
PhD thesis \cite{ribeiro2}, which was supported by FAPESP under grant no. 01/14360-1. 
I would like to thank the II. Institut f\"{u}r theoretische Physik, Universit\"{a}t Hamburg 
for the hospitality during the long and somewhat disconnected process of writing 
this paper and improving the original results from my thesis.

% \begin{appendix}

% \section{\label{a1}The pseudo-Riemannian geometry of the exponential map}

%%%

% For $p\in\M$, $X=X(p)\in T_p\M$, let $\gamma_X:[0,a)\To\M$ be the unique inextendible half 
% geodesic such that $\gamma_X(0)=p$, $\gamma'(0)=X$. Consider the open star-shaped set 
% \[
% U_p\doteq\{X(p)\in T_p\M:\Gamma_X(t)\mbox{ is defined for all }t\in [0,1]\}.
% \] 
% Then we can define the \emph{exponential map} $\exp_p:U_p\ni X\mapsto\exp_p(X)\doteq\gamma_X(1)$.
% By the smooth dependence on initial data of solutions of ordinary differential
% equations with smooth coefficients, it follows that $\exp_p$ is a smooth map which
% is nonsingular at the origin of $T_p\M$, hence a diffeomorphism in an open star-shaped
% neighborhood $V_p$ of the origin by the inverse function theorem. Therefore, 
% $(\exp_p(V_p),\exp^{-1}_p\restr{\exp_p(V_p)}$ defines a so-called \emph{normal coordinate 
% chart} around $p$. This chart has the property that $\exp_p^*g(0)=\eta$, where $\eta$
% is the Minkowski metric on $T_p\M$.

% Now, given a future timelike vector field $T$ fixing the time orientation of $(\M,g)$, we 
% want to study the largest possible region of the closed forward light cone at $p$ 
% $\bar{V}^+_p\doteq\{X\in T_p\M:g(X,X)\leq 0,g(T,X)<0\}$ where $\exp_p$ is a diffeomorphism 
% (a similar analysis can be performed for the closed backward light cone at $p$ $\bar{V}^-_p
% \doteq\{X\in T_p\M:g(X,X)\leq 0,g(T,X)>0\}$). 

% \end{appendix}

\end{document}